\declaretheorem[name=Theorem, parent=section]{thm}
\declaretheorem[name=Proposition, sibling=thm]{prop}
\declaretheorem[name=Example, style=definition, qed=$\triangleleft$, sibling=thm]{example}
\declaretheorem[name=Remark, style=definition, qed=$\triangleleft$, sibling=thm]{rem}
\tikzset{ 
table/.style={
  matrix of nodes,
  row sep=-\pgflinewidth,
  column sep=-\pgflinewidth,
  nodes={rectangle,draw=black,text width=10ex,align=center},
  text depth=0.25ex,
  text height=3ex,
  nodes in empty cells
  },
texto/.style={font=\footnotesize\sffamily},
title/.style={font=\small\sffamily}
}
\definecolor{mblack}{gray}{0.25}
\definecolor{mgray}{gray}{0.7}
\definecolor{mwhite}{gray}{0.95}
\newcommand\nadpis[2]{\node[anchor=base] at ([yshift=4ex]A-1-#1) {#2};}
\newcommand\nadnadpis[2]{\node[anchor=base] at ([yshift=6.5ex]A-1-#1) {#2};}
\newcommand\podpis[2]{\node[anchor=base] at ([yshift=-5.25ex]A-#1) {#2};}
\newcommand\vpis[2]{\node[anchor=base] at ([yshift=-.8ex]A-#1) {#2};}
\newcommand\vpisw[2]{\node[anchor=base] at ([yshift=-.8ex]A-#1) {\textcolor{white}{#2}};}
\newcommand\predpis[2]{\node at ([xshift=-11ex]A-#1-1) {#2};}
\newcommand\ocisluj[1]{\foreach[evaluate={\y=int(\x-1)}] \x in {1,...,#1} {\node at ([xshift=-6ex]A-\x-1) {$\y$};}}
\newcommand{\wigglyampl}{0.5mm}
\newcommand{\wigglyseglen}{2mm}
\newcommand{\wigglywidth}{2.5mm}
\newcommand{\wigglycellheight}{10mm}
\newcommand{\mf}{\mathfrak}
\newcommand{\mc}{\mathcal}
\newcommand{\on}{\operatorname}
\newcommand{\g}{\mathfrak{g}}
\newcommand{\kk}{\mathfrak{k}}
\newcommand{\h}{\mathfrak{h}}
\newcommand{\bw}{{\textstyle\bigwedge}}
\newcommand{\la}{\langle}
\newcommand{\ra}{\rangle}
\newcommand{\R}{\mathbb{R}}
\newcommand{\half}{\frac{1}{2}}
\newcommand{\subsetsep}{\;\subset\;}
\title{A non-abelian duality for (higher) gauge theories}
\author{Ján Pulmann}
\author{Pavol \v{S}evera}
\author{Fridrich Valach}
\address{Section of Mathematics, Universit\'{e} de Gen\`{e}ve, Geneva, Switzerland}
\email{jan.pulmann@unige.ch}
\address{Section of Mathematics, Universit\'{e} de Gen\`{e}ve, Geneva, Switzerland}
\email{pavol.severa@gmail.com}
\address{Mathematical Institute, Faculty of Mathematics and Physics, Charles University, Prague, Czech Republic}
\email{fridrich.valach@gmail.com}
\thanks{Supported in part by  the grant MODFLAT of the European Research Council and the NCCR SwissMAP of the Swiss National Science Foundation.  F.V.\ was supported also by the GAČR Grant EXPRO 19-28628X}
\begin{document}
\maketitle

\begin{abstract}
We consider a TFT on the product of a manifold with an interval, together with a topological and a non-topological boundary condition imposed at the two respective ends. The resulting (in general higher gauge) field theory is non-topological, with different choices of the topological conditions leading to field theories dual to each other. In particular, we recover the electric-magnetic duality, the Poisson-Lie T-duality, and we obtain new higher analogues thereof. 
\end{abstract}

\section{Introduction}

T-duality of 2-dimensional $\sigma$-models has two quite different generalizations: In 2 dimensions it has a non-abelian version called Poisson-Lie T-duality \cite{KS}. In 4 dimensions it has an analogue in electric-magnetic duality. A natural question is whether there is a single mechanism explaining both of these generalizations, giving rise to new dualities of possibly higher gauge theories.

The fact that duality in higher dimensions involves higher gauge theories can be illustrated by the following simple example. If $\Sigma$ is an $n$-dimensional space-time and $p+q+2=n$ ($p,q\geq0$), let us consider the action functional $S(A)=\int_\Sigma F\wedge*F$, where $F=dA$ and $A\in\Omega^p(\Sigma)$. The equations of motion are $d\,{*F}=0$ and (identically) $dF=0$. This pure higher electromagnetism has gauge symmetries $A\mapsto A+dA'$ ($A'\in\Omega^{p-1}(\Sigma)$), gauge symmetries of gauge symmetries $A'\mapsto A'+d A''$, etc. If we introduce $\tilde A\in\Omega^q(\Sigma)$ such that $*F=d\tilde A$ then (at least naively) we get a duality exchanging $p$ and $q$. Even if we start with a model with  no gauge symmetries ($p=0$) or with ordinary gauge symmetries ($p=1$), the dual model will have higher gauge symmetries, if the dimension $n$ is large enough. The general construction we're looking for thus needs to include higher gauge theories.

Inspired by \cite{SCS}, where Poisson-Lie T-duality is explained in terms of suitable boundary conditions of Chern-Simons theory, we propose the following picture for such higher dualities. Suppose $\alpha$ is an $n+1$-dim topological field theory (TFT), $F$ is a non-topological boundary field theory of $\alpha$ requiring a Riemannian metric or a similar structure on the boundary (see \cite{FT}), and  $L$ is a \emph{topological} boundary  condition of $\alpha$. This data gives us a \emph{non-topological $n$-dim field theory}: given an $n$-dimensional Riemannian manifold $\Sigma$, we obtain it from $\alpha$ on $\Sigma\times[0,1]$ with $F$ on $\Sigma\times\{0\}$ and $L$ on $\Sigma\times\{1\}$, which we interpret as a field theory on $\Sigma$.
$$
\begin{tikzpicture}[yscale=0.5]
\fill[black!10!white] (0,0)--(0,2)--(4,2)--(4,0);
\draw[very thick](0,0)--(4,0);
\draw[very thick,blue!80!black] (0,2)--(4,2);
\node at (2,1) {$\Sigma\times [0,1]$};
\node[right] at (4,0) {$\Sigma$};
\draw[red!80!black,very thick, <-, >=latex] (3.5,1)--(4.5,1) node[right,black]{$\alpha$ - an $n+1$-dim TFT};
\draw[red!80!black,very thick, <-, >=latex] (2,2) to [out=70,in=180] (3,3) node[right,black]{$L$ - a topological boundary condition};
\draw[red!80!black,very thick, <-, >=latex] (2,0) to [out=-70,in=180] (3,-1) node[right,black]{$F$ - a (non-topological) boundary field theory};
\end{tikzpicture}
$$

If in place of $L$ we use another topological boundary condition $L'$ then we get a possibly different, though closely related $n$-dim field theory. It may easily happen that $L$ and $L'$ are equal (or isomorphic) even though they come from two different \emph{classical} boundary conditions. This is precisely what happens in T-duality, when $\alpha$ is an abelian Chern-Simons theory given by a suitable torus, and $L$ and $L'$ come from two different Lagrangian subgroups of the torus, yet giving the same boundary condition at the quantum level \cite{KaSa}. A suitable choice of $F$, together with $\alpha$ and with $L$ or $L'$, then produces two 2-dim $\sigma$-models with the worldsheet $\Sigma$, linked by T-duality. 

In general we shall call two theories obtained from \emph{the same $\alpha$ and $F$}, but from possibly \emph{different $L$ and $L'$}, dual to each other. The full relation between such theories should come from the $n$-category structure of the set of all topological boundary conditions \cite{Kapustin}.

The aim of our paper is to look at this picture from the BV perspective. The  TFTs we shall consider are of the AKSZ type \cite{AKSZ}, i.e.\ given by a dg symplectic manifold $X$, and $L$'s by dg Lagrangian submanifolds of $X$. $F$'s will be given by suitable data yielding dg Lagrangian submanifolds in $\on{Maps}(T[1]\Sigma,X)$. As we shall see, if we keep $X$ and $F$ fixed, different $L$'s will give rise to quite different (higher) gauge theories, which are dual to each other according to our definition. The (higher) gauge symmetries will appear automatically via the BV formalism.

The examples we obtain include Poisson-Lie T-duality (when $\alpha$ is a Chern-Simons theory and $X=\g[1]$, where $\g$ is the corresponding Lie algebra), electric-magnetic duality ($n=4$, $X=\R^2[2]$), and also many exotic-looking (higher) gauge theories. While any gauge theory, including Yang-Mills, can be put into the $(\alpha,F,L)$ form, we do not find a duality involving Yang-Mills theory due to lack of a suitable $L'$ (more precisely, due to the acyclicity of $X$). Hopefully this issue can be addressed by including supersymmetry into the formalism to get a duality of the Montonen-Olive type \cite{MO}, but we leave it to a future work.

The main technical tool that we use for calculations is the derived intersection of Lagrangian submanifolds (or maps) introduced in \cite{ptvv}. It is a somewhat complicated concept, so we use its simplified version which we describe in some detail. The price to pay for this simplification is that our calculations are sometimes only local (since some relevant objects may exist only locally). In fact, to get a truly global description, we would need to use, as in \cite{ptvv}, higher derived stacks in place of dg manifolds. On the other hand, our simplified methods give relatively simple action functionals in the BV formalism, and globalization can often be done ad hoc. We leave these global issues for a future work as well.

\subsection*{Acknowledgments}
We would like to thank to Branislav Jur\v co for useful comments on a preliminary version of this paper.

\subsection*{Notation and terminology}
If $V$ is a $\mathbb Z$-graded vector space and if $n\in\mathbb Z$ then $V[n]$ denotes $V$ with the grading shifted by $n$: $V[n]_k=V_{n+k}$.

A \emph{graded manifold} is a supermanifold $X$ whose algebra of functions is $\mathbb Z$-graded (and not just $\mathbb Z_2$-graded). An \emph{N-manifold} (non-negatively graded manifold) corresponds to the case of a $\mathbb Z_{\geq0}$-grading; equivalently, it is a supermanifold with an action of the semigroup $(\R,\times)$ such that $-1\in\R$ acts as the parity operator.

For any graded manifold $X$ let $E_X$ denote its \emph{Euler vector field} given by $E_Xf=(\deg f)f$.

A \emph{differential graded (dg) manifold} is a graded manifold $X$ equipped with a vector field (differential) $Q_X$ such that $\deg Q_X=1$ and $Q_X^2=0$. An NQ-manifold is an N-manifold with a differential.

The simplest examples of NQ-manifolds are $T[1]M$, where $M$ is an ordinary manifold. We have $C^\infty(T[1]M)=\Omega(M)$ and $Q_{T[1]M}$ is the de Rham differential. If $M$ is oriented then on $T[1]M$ we have a natural $Q_{T[1]M}$-invariant volume form, corresponding to the integration of differential forms on $M$.

 If $\g$ is a Lie algebra then $\g[1]$ is another example of an NQ-manifold, with $C^\infty(\g[1])=\bw\g^*$, and with $Q_{\g[1]}$ being the Chevalley-Eilenberg differential.

A \emph{dg symplectic manifold} is a dg manifold $X$ equipped with a symplectic form $\omega$ s.t.\ $L_{Q_X}\omega=0$ and $\deg\omega=n$ for some $n\in\mathbb Z$. If $n\neq0$ the 1-form $\theta=i_{E_X}\omega/n$ satisfies $d\theta=\omega$. If $n\neq -1$ then $H_X=i_{Q_X}i_{E_X}\omega/(n+1)$ is a Hamiltonian of the vector field $Q_X$.

If $n=-1$ and if a Hamiltonian $H_X$ (of degree 0) of $Q_X$ is given then $X$ is a \emph{classical BV manifold}. In other words, $X$ is a graded manifold with a symplectic form $\omega$ of degree $-1$ and with a function $H_X$ of degree 0, satisfying the \emph{classical master equation} (CME)
$$\{H_X,H_X\}=0.$$
Typically $X$ will be infinite-dimensional (a space of fields) and the Hamiltonian $H_X$ will be rather denoted by $S_X$, playing the role of an action functional.

\section{AKSZ models}
Let us review the construction of the AKSZ model \cite{AKSZ}. It is given by a symplectic NQ manifold $(X,Q_X,\omega_X)$  with $\deg\omega_X=n$. If $M$ is an oriented compact $n+1$ dimensional manifold then 
$$\mc M:=\on{Maps}(T[1]M,X)$$
is an infinite-dimensional classical BV manifold with the symplectic form
 $$\omega_\mc M (u,v)=\int_{T[1]M}\omega_X(u,v).$$ 
The differential $Q_\mc M$ is given by the difference of $Q_X$ and $Q_{T[1]M}$.

The Hamiltonian $S_\mc M$  of the homological vector field $Q_\mc M$ can be computed as
\begin{equation}\label{akszion}
S_\mc M(f):=\int_{T[1]M} i_{d_M} f^*\theta_X + f^*H_X
\end{equation}
where $\theta_X\in\Omega^1(X)$ is an arbitrary 1-form such that $d\theta_X=\omega_X$ (e.g.\ $\theta_X=i_{E_X}\omega_X/n$ where $E_X$ is the Euler vector field) and $d_M=Q_{T[1]M}$ is the de Rham differential on $M$. $S_\mc M$ is the action functional of the AKSZ model (the akszion functional). The critical points of $S_\mc M$ are the $Q$-preserving maps $f:T[1]M\to X$. By construction $S_\mc M$ solves the classical master equation (CME)
$$\{S_\mc M,S_\mc M\}=0.$$

\begin{example}[Chern-Simons theory]\label{ex:CS}
If $\g$ is a Lie algebra with an invariant inner product $\la,\ra$ then $X=\g[1]$ has a $Q_X$-invariant symplectic form $\omega_X$ (given by $\la,\ra$) of degree $n=2$. We have $\mc M=\Omega(M,\g)[1]$ and the action $S_\mc M$ is
$$S_\mc M(A)=\int_M\frac{1}{2}\la A,dA\ra+\frac{1}{6}\la[A,A],A\ra.$$
If $A^{(i)}$ is the $i$-form part of $A$ ($i=0,1,2,3$) then $A^{(1)}$ is a connection (a field), $A^{(0)}$ the ghost corresponding to the gauge transformations, $A^{(2)}$ is the antifield of $A^{(1)}$ and $A^{(3)}$ the antifield of $A^{(0)}$.
\end{example}

\section{Topological boundary conditions of AKSZ models}
If $M$ has a boundary we can still define $\omega_\mc M$ as above, but it is no longer $Q$-invariant. Namely, the space of boundary fields $\mc X:=\on{Maps}(T[1]\partial M,X)$ is dg symplectic with $\deg\omega_\mc X=0$, where
 $$\omega_\mc X (u,v)=\int_{T[1]\partial M}\omega_X(u,v),$$ 
 and Stokes theorem gives us
\begin{equation}\label{LQomega}
L_{Q_\mc M}\omega_\mc M=b^*\omega_\mc X
\end{equation}
where $b:\mc M\to\mc X$ is the restriction map.

A dg Lagrangian submanifold $L\subset X$ will then give us a boundary condition for the AKSZ model:  we restrict $\mc M$ to 
$$\mc M_L:=b^{-1}\bigl( \on{Maps}(T[1]\partial M,L) \bigr) \subsetsep \mc M,$$
which is (unlike $\mc M$) a classical BV manifold. The corresponding action functional 
(i.e.\ the Hamiltonian of $Q_{\mc M_L}$)  is still
$$S_{\mc M_L}(f)=\int_{T[1]M} i_{d_M} f^*\theta_X + f^*H_X$$
provided $\theta_X=i_{E_X}\omega_X/n$ (as then $\theta_X|_L=0$); other choices of $\theta_X$ (s.t.\ $d\theta_X=\omega_X$)  produce boundary terms in $S_{\mc M_L}$.

\begin{example}[Topological boundary conditions in Chern-Simons theory]\label{ex:pltop}
Continuing Example \ref{ex:CS}, if $\h\subset\g$ is a Lagrangian Lie subalgebra then
 $$L:=\h[1]\subsetsep\g[1]=:X$$
 is a dg Lagrangian submanifold. The corresponding boundary condition of the Chern-Simons theory requires $A|_{\partial M}$ to have values in $\h$.
\end{example}

More generally one can use Lagrangian maps (see Section \ref{sec:lagmaps}) as topological boundary conditions.

\section{Non-topological boundary conditions of AKSZ models}\label{sec:non_top}
Suppose now that $\Sigma=\partial M$ is endowed with a Riemannian metric or with a similar geometric structure. A non-topological boundary condition of the AKSZ model should be a dg Lagrangian submanifold in the space of boundary fields
$$\mc F\subsetsep \mc X:=\on{Maps}(T[1]\Sigma,X)$$
depending on the choice of the Riemannian metric.

 The space of fields satisfying the boundary condition
$$\mc M_\mc F=\mc M\times_\mc X\mc F=\bigl\{f\in\on{Maps}(T[1]M,X) \bigm| f|_{T[1]\Sigma}\in\mc F\bigr\}$$
is again a classical BV manifold.

To compute $S_{\mc M_\mc F}$ we need to suppose that $\mc F\subset\mc X$ is \emph{exact} Lagrangian. Namely, if we define the 1-form $\theta_\mc X$ on $\mc X$ via 
$$\theta_\mc X(u)=\int_{T[1]\Sigma}\theta_X(u)$$
so that $d\theta_\mc X=\omega_\mc X$, then we need to have a functional $S_\mc F$ on the space of fields $\mc F$ such that 
$$d S_\mc F=\theta_\mc X|_\mc F.$$
 Then
$$S_{\mc M_\mc F}=S_\mc M+S_\mc F$$
which follows from $d S_\mc M=i_Q\omega_\mc M-\theta_\mc X$. 

There is a particularly simple class of dg Lagrangian submanifolds of $\mc X$: if $\mc F\subset\mc X$ is a graded Lagrangian submanifold such that $C^\infty(\mc F)$ is non-positively graded (i.e.\ $\mc F$ is modeled by a non-negatively graded vector space) then $\mc F$ is automatically dg Lagrangian for degree reasons. We shall call this type of $\mc F$'s \emph{ghostless}. 

\begin{rem}
Let us suppose that $\theta_X=i_{E_X}\omega_X/n$. Then $\theta_\mc X|_\mc F=0$, and thus we can take $S_\mc F=0$, iff $\mc F$ is $E_X$-invariant, i.e.\ iff it is $E_{T[1]\Sigma}$-invariant. In other words we have $S_\mc F=0$ for scale-invariant boundary conditions.
\end{rem}

\begin{example}\label{ex:pl1}
In the case of Chern-Simons theory ($n=2$, $X=\g[1]$) a natural boundary condition is given by a \emph{generalized metric} on $\g$, i.e.\ by a symmetric linear map $E:\g\to\g$ such that $E^2=1$ (a reflection). If $\Sigma$ has  a pseudo-conformal  structure then
$$\mc F=\{A\in\Omega^1(\Sigma,\g)\mid *A=EA\}\oplus\Omega^2(\Sigma,\g)\subsetsep\Omega(\Sigma,\g)=\mc X$$
is a dg Lagrangian submanifold. This $\mc F$ is scale-invariant and ghostless.
\end{example}

\section{AKSZ sandwich and the duality, or plurality, of field theories}\label{sec:duality}

Let $\Sigma$ be an oriented $n$-dimensional manifold.
Let us consider the AKSZ model on $M=\Sigma\times I$ (where $I=[0,1]$), with a non-topological boundary condition $\mc F$ on $\Sigma\times\{0\}$ and with a topological boundary condition given by $L\subset X$
 on $\Sigma\times\{1\}$. The resulting action functional is again a solution of the CME, and can be interpreted as a classical field theory on $\Sigma$ (with infinitely many fields due to their dependence on $I$; see \S\ref{sec:computing} for an equivalent model with finitely many fields). We shall call this model an \emph{AKSZ sandwich}.

Let us  fix $X$ and $\mc F$ and consider AKSZ sandwiches with different $L$'s. While these field theories are not entirely equivalent, their difference is, in some sense, purely topological. We will call them \emph{dual} to each other, since this construction contains as a special case the Poisson-Lie T-duality  and the electric-magnetic duality, as we show below. For example, Poisson-Lie T-duality is given by Chern-Simons theory with the boundary conditions from examples \ref{ex:pltop} and \ref{ex:pl1} -- the corresponding sandwich model is equivalent to a 2-dim $\sigma$-model with the target $G/H$ and the duality itself corresponds to different choices of $\h\subset\g$.

The ``true'' duality should correspond to the cases when, at the quantum level, the topological boundary conditions given by the different $L$'s coincide (or are isomorphic).
 The word ``plurality'', suggested by R.\ von Unge \cite{vU} in the context of Poisson-Lie T-duality, might be more appropriate, as there may be more than 2 suitable $L$'s.

\medskip

Let us now explain how to reduce the AKSZ sandwich model to an equivalent field theory on $\Sigma$ with \emph{finitely} many fields. In some sense, this corresponds to integrating out superfluous fields (as done in the case of Poisson-Lie T-duality in \cite{SCS}). Our method is described in the following two sections, but other (and possibly more enlightening) approaches may exist.

\section{Resolutions of dg Lagrangian submanifolds}\label{sec:lagmaps}

This section contains some preliminaries needed for calculations with the AKSZ sandwiches. We start by defining a ``baby version'' of a \emph{Lagrangian map} $\ell\colon L\to X$, introduced in \cite{ptvv}, which is a more flexible notion than a dg Lagrangian submanifold $L\subset X$. 

First, we shall say that an NQ-manifold $Y$ is \emph{acyclic} if it is isomorphic to $T[1]Z$ for some N-manifold $Z$, or equivalently, if its tangential cohomology vanishes everywhere. 

Suppose now that $X$ and $Y$ are NQ symplectic, $\deg\omega_X=\deg\omega_Y=n$, and that $Y$ is  acyclic. If $L$ is a dg Lagrangian relation between $X$ and $Y$, i.e.\ if $L\subset X\times\bar Y$ is a dg Lagrangian submanifold, then we want to see $L$ as a ``generalized dg Lagrangian submanifold'' of $X$ (the ideology behind is that $Y$ is seen as equivalent to a point). We therefore call the projection $\ell:=p_X: L\to X$ a \emph{Lagrangian map}. As a particular example, $\on{id}\colon Y\to Y$ is Lagrangian (here $L\subset Y\times\bar Y$ is the diagonal embedding of the acyclic $Y$).

\begin{wrapfigure}[11]{r}{3.8cm}
	\centering
	\begin{tikzcd}
	L \arrow[r, hook] \arrow[d, "q"']                       & X                                  \\
	R \arrow[r, hook] \arrow[ru, "\ell", two heads] \arrow[rd, "j"'] & X\times\bar{Y} \arrow[u] \arrow[d] \\
	& Y                                
	\end{tikzcd}

	\caption{Resolution of $L\subset X$ via $R$.}
\end{wrapfigure}

By a \emph{resolution} of a dg Lagrangian submanifold $L\subset X$ we mean a Lagrangian map $\ell\colon R \to X$, which is a surjective submersion, together with a quasiisomorphism $q\colon L \to R$ such that $\ell \circ q$ coincides with the inclusion $L\to X$.

Let us notice that in this case the projection $j\colon R\to Y$ is an immersion. We can thus view $R$ as a coisotropic submanifold of $Y$ and $X$ as the symplectic reduction (the space of null leaves) of $R$.

\begin{example}\label{ex:G/H}
Let $\g$ be a non-positively graded Lie algebra with an invariant pairing turning $X=\g[1]$ to a degree $n$ symplectic NQ manifold. Let $\h\subset\g$ be a Lagrangian graded Lie subalgebra, so that $\h[1]\subset\g[1]$ is a dg Lagrangian submanifold. We can get its resolution as follows.

 If $H\subset G$ are NQ-groups (with zero differential) integrating $\h\subset\g$ then 
$$Y:=\g[1]\times T^*[n](G/H)$$
 with the Hamiltonian $H_Y=H_{\g[1]}+H_\text{action}$ is a degree $n$ symplectic acyclic NQ manifold. Here $H_\text{action}$ is the action of the Lie algebra $\g$ on $G/H$ (a map from $\g$ to vector fields on $G/H$), seen as a function on $\g[1]\times T^*[n](G/H)$.

The NQ submanifold $R:=\g[1]\times G/H\subset Y$ is coisotropic, and the projection $p_{\g[1]}\colon R\to\g[1]$ is its symplectic reduction. The inclusion $\h[1]\times 1\subset R$ is a quasiisomorphism. This makes $R$ a resolution of $\h[1]\subset\g[1]$.
\end{example}

\section{Computing the sandwich action functional}\label{sec:computing}

Seeing the AKSZ sandwich model from \S\ref{sec:duality} as a field theory on  $\Sigma$ gives us an infinite number of fields due to their dependence on  $I$. There is, however, a way to get an equivalent model with a \emph{finite} number of fields.

In the language of  \cite{ptvv}, the AKSZ sandwich model  can be seen as the derived intersection of the dg Lagrangian submanifolds  $\mc F$ and $\mc L$ of $\mc X$, where
$$\mc L=\on{Maps}(T[1]\Sigma,L).$$
This means the following. We replace the dg Lagrangian submanifold $\mc L\subset\mc X$ by a resolution $\lambda:\mc R\to\mc X$. The derived intersection of $\mc L$ with $\mc F\subset\mc X$ is then
\begin{equation}\label{derint}
\mc R_\mc F:=\lambda^{-1}(\mc F)\subset\mc R
\end{equation}
and, according to \cite{ptvv}, it is (up to homotopy) a classical BV manifold. Indeed, the AKSZ model gives us a particular $\mc R$, namely 
$$\mc R=\bigl\{f:T[1](\Sigma\times I)\to X \bigm| f|_{T[1](\Sigma\times\{1\})}\in\mc L\bigr\},$$
$\lambda:\mc R\to\mc X$ is given by $f\mapsto f|_{T[1](\Sigma\times\{0\})}$, 
and $\mc R_\mc F$ is then the space of fields of the AKSZ sandwich.

To find a resolution $\mc R$ of $\mc L$ it is enough to find a resolution $\ell:R\to X$ of $L\subset X$, and then set
$$\mc R=\on{Maps}(T[1]\Sigma,R).$$
Again, the AKSZ sandwich gives us a particular resolution $R$, namely the path space
$$R=\{f:T[1]I\to X\mid f(1)\in L\}.$$
The idea is to use a \emph{finite-dimensional} $R$ instead, which gives a more manageable but quasi-isomorphic space of fields.

%

Let us now describe  how the space of fields $\mc R_\mc F$ (or rather its reduction) becomes a classical BV manifold and how to write down the action functional, following (a baby version of) \cite{ptvv}.

The differential on $\mc R=\on{Maps}(T[1]\Sigma,R)$ is, as usual, given by the difference of the differentials on $R$ and on $T[1]\Sigma$. The space of fields $\mc R_\mc F\subset\mc R$ is a dg submanifold.

To get the BV  2-form on $\mc R_\mc F$ we need to write the acyclic symplectic $Y$ in the form
\begin{equation}\label{tildeY}
Y=T^*[n]\tilde Y
\end{equation}
where $\tilde Y$ is another symplectic NQ manifold with $\deg\omega_{\tilde Y}=n-1$.
 The symplectic form on $Y$ comes from $T^*[n]$ and the Hamiltonian  $H_Y$ is the Poisson structure of $\tilde Y$ (a quadratic function on $T^*[n]\tilde Y$) plus $Q_{\tilde Y}$ (a linear function on $T^*[n]\tilde Y$).%
\footnote{The symplectic form $\omega_{\tilde Y}$ gives us an isomorphism of graded manifolds $T^*[n]\tilde Y\cong T[1]\tilde Y$. The resulting differential on $T[1]\tilde Y$ is the standard one; this, in particular, implies the acyclicity of $T^*[n]\tilde Y$.} 
 Every acyclic symplectic $Y$ is locally of this form (we can even demand $Q_{\tilde Y}=0$), but globally there may be an obstruction in $H^{n+1}(Y^0;\R)$ (where $Y^0$ is the degree-0 part of $Y$).

Let
$$\ell\colon R\to X,\quad j\colon R\to Y,\quad p:Y\to\tilde Y$$
 be the projections (recall that $\ell$ is a surjective submersion and $j$  a coisotropic immersion). The sought-after 2-form $\tilde\omega$ on $\mc R_\mc F$ is 
 $$\tilde\omega(u,v)=\int_{T[1]\Sigma}(p\circ j)^*\omega_{\tilde Y}(u,v).$$
 As needed, it is closed,
 of degree $-1$, and satisfies $L_Q\tilde\omega=0$, but it may be degenerate.  Let us suppose that the space of the null leaves of $\tilde\omega$ is a graded manifold 
$$\mc Z:=(\mc R_\mc F)^\mathrm{reduced},$$
 i.e.\ that we have a surjective submersion ${\mc R_\mc F}\to\mc Z$ whose fibers are the null leaves of the 2-form. Then both $\tilde\omega$ and the differential descend to $\mc Z$ and make it to a classical BV manifold 
$$(\mc Z,\tilde\omega_\mc Z,Q_\mc Z).$$
 By construction this BV space of fields is equivalent to the AKSZ sandwich given by $\mc L$ and $\mc F$.
 
The action functional, i.e.\ the Hamiltonian generating $Q_\mc Z$, can be computed as follows. Let $H_\text{rel}\in C^\infty(R)$ be a function such that
$$d H_\text{rel}=j^*\theta_Y^\text{taut}-\ell^*\theta_X$$
where $\theta_Y^\text{taut}$ is the tautological 1-form on $Y=T^*[n]\tilde Y$.
Then we have
\begin{equation}
\label{eq:final}
S_{\mc Z}([f])=S_{\mathrm{AKSZ}}^{\tilde Y}(p\circ j\circ f)+S_\mc F(\ell\circ f)-\int_{T[1]\Sigma}f^* H_\text{rel}
\end{equation}
where $p\colon Y\to \tilde Y$ is the projection and $[f]\in\mc Z$ denotes the class of $f\in\mc R_\mc F$, and $S_{\mathrm{AKSZ}}^{\tilde Y}$ is the AKSZ action functional given by the symplectic NQ-manifold $\tilde Y$. Recall that $S_\mc F$ is a functional on $\mc F$ such that $d S_\mc F=\theta_\mc X|_\mc F$.
The function $H_\text{rel}$ can be computed as
$$H_\text{rel}=\tfrac1n i_{E_R}(j^*\theta^{\mathrm{taut}}_Y-\ell^*\theta_X).$$

\begin{rem}
Our construction can be explained also as follows: we replace the triple $(X,\mc F\subset\mc X,L)$ with $(Y,\mc R_\mc F\subset\mc Y,Y)$ (with $R$ seen as a coisotropic submanifold of $Y$ and $\ell:Y\to Y$ being the identity), where $\mc Y=\on{Maps}(T[1]\Sigma,Y)$. Even if $\mc F$ is ghostless, $\mc R_\mc F$ is often not, which explains the emergence of gauge symmetries in our setup.
\end{rem}

\begin{rem}
In \cite{ptvv} a more flexible definition of symplectic and Lagrangian structures is used, which removes two problems of our approach - the fact that $\tilde Y$ may exist only locally and the requirement that $\mc Z$ is smooth. On the other hand, relating the methods of \emph{op.\ cit.} to the standard BV formalism seems somewhat tricky.
\end{rem}

\begin{example}\label{ex:gradedplgrps}
Continuing Example \ref{ex:G/H}, we can put $Y=\g[1]\times T^*[n](G/H)$ to the form \eqref{tildeY} as follows. Suppose that $\h'\subset\g$ is a Lagrangian Lie subalgebra 
such that $\h\cap\h'=0$. In other words, $(\g,\h,\h')$ is a graded Manin triple, and $H$ and $H'$ thus become graded Poisson-Lie groups. The pairing on $\g$ gives us an isomorphism
$$\h^*\cong\h'[2-n].$$
The inclusion $H'\subset G$ then gives us a local diffeomorphism $H'\to G/H$. Let us suppose that it is a global diffeomorphism (or alternatively work locally in $G/H$).

In this case we can set 
$$\tilde Y= T^*[n-1] H' \cong\h[1]\times H'$$
(with the isomorphism given by the left translation). The Hamiltonian of $Q_{\tilde Y}$ is the Poisson structure on $H'$ seen as a function on $T^*[n-1] H'$. The equality \eqref{tildeY} appears via
\begin{align*}
T^*[n](\h[1]\times H')&=T^*[n]\h[1]\times T^*[n]H'\\
&=(\h'[1]\oplus\h[1])\times T^*[n] H'=\g[1]\times T^*[n]H'.\qedhere
\end{align*}
\end{example}

\section{Boundary conditions and $G$-structures}\label{sec:ul}
Let us now describe in some detail non-topological boundary conditions $\mc F$ of ``ultralocal type'', i.e.\ given by a choice of a graded Lagrangian submanifold 
$$\mc F_p\subsetsep\mc X_p:=\on{Maps}(T_p[1]\Sigma,X)$$
 for every $p\in\Sigma$. (This is done for simplicity; a local $\mc F$, depending on a finite number of derivatives, would be equally good.)  Notice that $\mc X_p$ is finite-dimensional, unlike $\mc X$. 

In more detail, we have a $\bw^n T^*_p\Sigma$-valued symplectic form on $\mc X_p$ given by the Berezin integral
$$\omega_{\mc X_p}(u,v)=\bigl(\omega_X(u,v)\bigr)^\text{top},$$
$\mc X_p$'s are the fibers of a fiber bundle  $\hat{\mc X}\to\Sigma$, and $\mc X=\Gamma(\hat{\mc X})$ is the space of its sections. The Lagrangian graded submanifolds $\mc F_p$ form a subbundle $\hat{\mc F}\subset\hat{\mc X}$, and $\mc F=\Gamma(\hat{\mc F})$.

As in \S \ref{sec:non_top}, we shall call $\mc F_p$ \emph{ghostless} if the algebra $C^\infty(\mc F_p)$ is non-positively graded. If $\mc F_p$ is ghostless for every $p\in\Sigma$ then $\mc F$ is ghostless and thus dg Lagrangian. Otherwise $\mc F$ is just a graded Lagrangian submanifold and we need to find conditions on $\mc F_p$'s to make it a dg submanifold. 

\begin{rem}\label{rem:negF}
A ghostless $\mc F_p$ is uniquely determined by the  Lagrangian submanifold
$$\mc F_p^{\,0}=\mc F_p\cap\mc X_p^{\,0}\subsetsep \mc X_p^{\,0}$$
where $\mc X_p^{\,0}\subset\mc X_p$ is the (finite-dimensional and ordinary, i.e.\ not graded) manifold of grading-preserving maps $T_p[1]\Sigma\to X$. The Lagrangian submanifold $\mc F_p^{\,0}$ can be chosen arbitrarily. This gives rise to an easy class of examples.

(Lagrangian submanifolds $\mc F_p^{\,0}\subset\mc X_p^{\,0}$ were introduced in \cite{some} as ``higher Hamiltonians'' with the purpose of explaining and possibly generalizing Poisson-Lie T-duality. In this sense our paper fulfills the dream of \cite{some}.)
\end{rem}

We now restrict our attention to ultralocal $\mc F$'s coming from $G$-structures.
Let us fix an $n$-dimensional oriented vector space $V$ (a model of $T_p\Sigma$) and suppose that $\Sigma$ is endowed with a $G$-structure for some $G\subset GL_+(V)$ (e.g.\ if $V$ is equipped with an inner product then $G=SO(V)$ would correspond to a Riemannian metric on $\Sigma$). This means that for every $p\in\Sigma$ we have a family of isomorphisms $T_p\Sigma\cong V$ on which $G$ acts transitively. These isomorphisms form a principal $G$-bundle $P\to\Sigma$. If 
$$\mc X_V:=\on{Maps}(V[1],X)$$
then $\hat{\mc X}\to\Sigma$ is the corresponding associated bundle $P\times_G\mc X_V$. On $\mc X_V$ we have a $\bw^nV^*$-valued symplectic form.

 Let 
$$\mc F_V\subset \mc X_V$$
be a $G$-invariant Lagrangian graded submanifold. It can be used to define  $\mc F_p$'s: the isomorphisms $T_p\Sigma\cong V$ (the elements of $P$)  turn $\mc F_V$ to $\mc F_p\subset\mc X_p$ for every $p\in\Sigma$. In other words, $\hat{\mc F}\to\Sigma$ is the  associated subbundle 
$$\hat{\mc F}=P\times_G\mc F_V\subsetsep P\times_G\mc X_V=\hat{\mc X}.$$


\begin{prop}\label{prop:F0}
	$\mc F\subset\mc X$, coming from a $G$-invariant $\mc F_V\subset\mc X_V$, is dg Lagrangian for every $G$-structure on $\Sigma$ iff
	\begin{enumerate}
		\item $\mc F_V\subset\mc X_V$ is dg Lagrangian, where $Q$ on $\mc X_V$ comes from $Q$ on $X$
		\item $T\mc F_V\subset T\mc X_V$ is a $\bw V^*$-submodule (where we use the fact that $T_f\mc X_V=\Gamma(f^*TX)$ is a $C^\infty(V[1])=\bw V^*$-module for every $f\in\mc X_V$)
		\item $\mc F_V\subset\mc X_V$ is invariant under the graded Lie algebra of vector fields on $V[1]$ vanishing at the origin at least quadratically.
	\end{enumerate}
	The last condition is dropped if we consider only $G$-structures which admit a compatible torsion-free connection.
\end{prop}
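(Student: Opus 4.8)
The plan is as follows. Since $\mc F$ is graded Lagrangian by construction, the whole content of the statement is that the homological vector field $Q_{\mc X}$ is tangent to $\mc F$. I would write $Q_{\mc X}=Q_{\mc X}^{\mathrm{tar}}-Q_{\mc X}^{\mathrm{dR}}$ as the difference of its ``target'' part (the vector field induced by $Q_X$, which acts fibrewise on $\hat{\mc X}\to\Sigma$) and its ``de Rham'' part (induced by $Q_{T[1]\Sigma}$). Because $\mc F=\Gamma(\hat{\mc F})$, tangency of $Q_{\mc X}$ to $\mc F$ is equivalent to the following: for every $f\in\mc F$ and every $p\in\Sigma$, the restriction of $Q_{\mc X}(f)\in\Gamma(T[1]\Sigma,f^*TX)$ to the fibre $T_p[1]\Sigma$ lies in $T_{f_p}\mc F_p\subset T_{f_p}\mc X_p$. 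The first step is to observe that this restriction depends only on the $1$-jet of $f$ along $T_p[1]\Sigma$, and to split it accordingly.

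The target part restricts to $Q_X\circ f_p$, which is exactly the value at $f_p$ of the vector field on $\mc X_p$ induced by $Q_X$; as $f_p$ ranges over $\mc F_p$ this gives, via the identifications $\mc X_p\cong\mc X_V$ and $\mc F_p\cong\mc F_V$ (which intertwine the differentials coming from $Q_X$), exactly condition (1). For the de Rham part I would work in a local frame coming from a local section of the $G$-structure bundle $P$ near $p$ and expand $f^*\xi^a$ in the corresponding coframe $e^i$; in this trivialisation $f$ becomes an $\mc F_V$-valued map $F$ on a neighbourhood of $p$. Computing $d(f^*\xi^a)\big|_{T_p[1]\Sigma}$ then produces two terms: (a) $\sum_j e^j\cdot(\partial_{e_j}F)\big|_p$, where multiplication by the degree-$1$ function $e^j$ is the $\bw V^*$-module action on $T_{f_p}\mc X_V=\Gamma(f_p^*TX)$ and $\partial_{e_j}F\big|_p\in T_{f_p}\mc F_V$ automatically (since $F$ is $\mc F_V$-valued); and (b) $Tf_p(\Xi_{c(p)})$, where $c^i_{jk}(p)$ are the structure functions of the coframe at $p$ and $\Xi_c=-\tfrac12 c^i_{jk}\xi^j\xi^k\partial_{\xi^i}$ is the associated \emph{degree-$1$} vertical vector field on $V[1]$. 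Hence the de Rham part is tangent to $\mc F$ for all $G$-structures iff $\bw V^*\cdot T_{f_p}\mc F_V\subset T_{f_p}\mc F_V$ (condition (2), handling term (a), as $\partial_{e_j}F|_p$ can be made an arbitrary element of $T_{f_p}\mc F_V$) and $\mc F_V$ is invariant under every degree-$1$ vertical vector field on $V[1]$ (handling term (b)).

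To close the equivalence I would reconcile this ``degree-$1$ invariance'' with condition (3). In one direction, degree-$1$ vector fields lie in the graded Lie algebra of vector fields vanishing at least quadratically, so (3) implies degree-$1$ invariance. Conversely, every vertical vector field vanishing at least quadratically is a $\bw V^*$-linear combination of degree-$1$ ones, and for any vertical vector field $\Xi$ and $\alpha\in\bw V^*$ one has $(\alpha\Xi)\big|_f=\alpha\cdot(\Xi\big|_f)$; so, granting (2), degree-$1$ invariance upgrades to invariance under all of them. Thus, given (1) and (2), conditions (3) and ``degree-$1$ invariance'' coincide. For the ``only if'' direction I would also check that (1) and (2) are already forced using only the flat $G$-structure on a ball (taking $f$ constant in the flat frame to isolate (1), and varying $f$ in each coordinate direction to pick up multiplication by each $\xi^j$ for (2)), so nothing is lost in the necessity claims. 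The torsion-free addendum is then immediate: if the $G$-structure admits a compatible torsion-free connection $\nabla$, use the corresponding coframe, for which $de^i=-\omega^i_j\wedge e^j$ with $\omega$ valued in $\g$; the term (b) is then absorbed into covariant derivatives $\nabla_{e_j}f$, which lie in $T_{f_p}\mc F_p$ because $\nabla$ is a $G$-connection and $\mc F_V$ is $\g$-invariant, and multiplication by the $e^j$ is controlled by (2) — so (3) is no longer needed.

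I expect the main obstacle to be the realizability input behind the necessity of (3): that by varying the $G$-structure the structure functions $c(p)$, equivalently the intrinsic torsion, can be made to hit an arbitrary element of $\bw^2 V^*\otimes V$ modulo $\delta(V^*\otimes\g)$ (where $\delta$ antisymmetrises the $\g$-action), together with the complementary observation that the $\delta(V^*\otimes\g)$-ambiguity contributes only vector fields of the form $\xi^j\cdot(\g\text{-element})$, already harmless once (2) and the $G$-invariance of $\mc F_V$ are known. Getting these bookkeeping points exactly right — and controlling the signs in the computation of $d(f^*\xi^a)\big|_{T_p[1]\Sigma}$ that yields the two-term decomposition — is where the real care is needed; the rest is formal.
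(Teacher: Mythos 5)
Your proposal follows essentially the same route as the paper: expand a section of $\hat{\mc F}$ to first order at a point, decompose $Q_{\mc X}$ there into the target term, the $\theta^j\cdot(\partial_{e_j})$ term, and the quadratic vertical vector field built from the structure functions of the adapted coframe, matching these to conditions (1)--(3), with the torsion-free addendum handled by noting the structure functions then lie in $\operatorname{Lie}(G)$. Your extra care about upgrading degree-$1$ invariance to condition (3) via the $\bw V^*$-module structure, and about realizing arbitrary intrinsic torsion modulo $\delta(V^*\otimes\g)$, fills in details the paper leaves implicit but does not change the argument.
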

\begin{proof}
	Locally, $\mc X=\on{Maps}(T[1]\Sigma,X)$ is the graded vector space of forms on $\Sigma$ with 
	values in $X$ (seeing $X$ locally as a graded vector space). This allows us to see the vector field $Q_\mc X$ as a (non-linear and grading-preserving) map $\mc X\to\mc X[1]$.
	
	 Choosing a basis of $V$, a $G$-structure on $\Sigma$ and its local section give a frame $e_1, \dots, e_n$ and we denote the dual 1-forms as $\theta^k$. Let $\sigma^i$ be local coordinates on $\Sigma$ centered at a point $p\in\Sigma$; $\sigma$'s and $\theta$'s are thus local coordinates on $T[1]\Sigma$.
	
	For  
	$$\phi\in\mc F\subsetsep\mc X=\Omega(\Sigma,X)$$
	 we have,%
\footnote{Strictly speaking, as usual when working with ``points'' of supermanifolds, $\phi$ should be allowed to depend on auxiliary odd parameters and its total parity should be even. We suppress these odd parameters in the notation.}
 close to $p\in \Sigma$,
	\[ \phi(\sigma, \theta) = \phi(0, \theta) + \sigma^i \delta_i \phi(0, \theta) + O(|\sigma|^2) \,,\]
	where $\phi(0, \theta)$ is an element of 
	$$\mc F_V \subset \mc X_V=\on{Maps}(V[1],X),$$ and $\delta_i\phi(0, \theta)$ are vectors tangent to $\mc F_V$ at $\phi(0, \theta)$.
	To check that $\mc F$ is preserved by the differential on $\mc X$, we can just check the value of $Q_{\mc X}(\phi)$ at $\sigma=0$:
	\[ (Q_{\mc X}\phi)(0, \theta) = Q_X(\phi(0, \theta)) -  (d_\Sigma \sigma^i) \delta_i \phi_0(0, \theta) - d_\Sigma|_p \phi(0, \theta) \,.\]
In the last term, $d_\Sigma|_p  = -\half\theta^i([e_j, e_k])|_p \theta^j \theta^k\partial_{\theta^i}$.
	Since the $G$-structure can be arbitrary, each of these three terms has to be tangent to $\mc F_V$, giving the three conditions from the proposition.

	If a $G$-structure has a compatible connection, the vanishing of its torsion tells us that the matrices $(C_k)^i_j = \theta^i([e_j, e_k])$ 
	lie in $\on{Lie}(G)$ for every $k$. Since $\mc F_V$ is invariant under $\on{Lie}(G)$ and multiplication by $\theta^k$, the third condition is
	satisfied.
\end{proof}

Finally we should remember that $\mc F$ needs to be \emph{exact} Lagrangian. On $\mc X_p$ we have a $\bw^nT^*_p\Sigma$-valued 1-form $\theta_{\mc X_p}$ given by 
$$\theta_{\mc X_p}(u)=\bigl(\theta_X(u)\bigr)^\text{top}\in\bw^nT^*_p\Sigma$$
satisfying $d\theta_{\mc X_p}=\omega_{\mc X_p}$ and we need $\theta_{\mc X_p}|_{\mc F_p}$ to be exact, equal to $dH_{\mc F_p}$ for some $\bw^nT^*_p\Sigma$-valued function $H_{\mc F_p}$ on $\mc F_p$. Moreover $H_{\mc F_p}$ should depend smoothly on $p\in\Sigma$. We then have
$$S_\mc F=\int_{p\in\Sigma}H_{\mc F_p}.$$
 In the case of $\mc F$ given by a $G$-structure and by $\mc F_V\subset\mc X_V$ we  need a suitable $\bw^nV^*$-valued function $H_{\mc F_V}$ on $\mc F_V$.

\begin{example}[Generalized metric]\label{ex:GM}
Let us repeat Example \ref{ex:pl1} from our current point of view.
We have $n=2$, $X=\g[1]$ for some quadratic Lie algebra $\g$ (i.e.\ the AKSZ model is the Chern-Simons theory given by $\g$).

Let $V$ be an oriented 2-dim vector space equipped with an endomorphism (reflection)
$$*:V\to V,\quad *^2=1,\quad \on{Tr}*=0.$$
 Let $K\subset GL_+(V)$ be the group of linear transformations commuting with $*$. $K$-structures are thus pseudo-conformal structures (giving a Hodge star $*$ on $\Omega^1(\Sigma)$).

We have $\mc X_V=\bw V^*\otimes\g[1]$, and in particular 
$$\mc X_V^{\,0}=V^*\otimes\g=\on{Hom}(V,\g).$$
 If $E:\g\to\g$ is a symmetric linear map such that $E^2=1$ then 
$$\mc F_V^{\,0}:=\{T:V\to\g \mid T*=ET\}\subsetsep\on{Hom}(V,\g)=\mc X_V^{\,0}$$
is a Lagrangian vector subspace. The map $E$ is called \emph{a generalized metric on $\g$}.\footnote{Usually one also adds the requirement that $\la\cdot,E\cdot\ra$ is positive definite, which corresponds to the positivity of the corresponding Hamiltonian.}

Then
$$\mc F_V:=\mc F_V^{\,0}\oplus\bigl(\bw^2V^*\otimes\g[1]\bigr)$$
is ghostless and thus satisfies the conditions of Proposition \ref{prop:F0}. Given a pseudoconformal structure  on $\Sigma$ we  have
\[\mc F=\{A\in\Omega^1(\Sigma,\g)\mid *A=EA\}\oplus\Omega^2(\Sigma,\g)\subsetsep\Omega(\Sigma,\g)=\mc X.\qedhere\]
\end{example}

\begin{example}[Dressing cosets]\label{ex:dressing}
As a small generalization, let $\mf i\subset\g$ be an isotropic Lie  subalgebra and let $E\colon\mf i^\perp/\mf i\to \mf i^\perp/\mf i$ be an $\mf i$-invariant symmetric linear map such that $E^2=1$. Let
$$\mc F_V^{\,0}:=\{T:V\to\mf i^\perp \mid PT*=EPT\}\subsetsep\on{Hom}(V,\g)=\mc X_V^{\,0}$$
where $P:\mf i^\perp\to\mf i^\perp/\mf i$ is the projection. Then
$$\mc F_V:=\mf i[1] \oplus \mc F_V^{\,0}\oplus \bigl(\bw^2V^*\otimes\mf i^\perp[1]\bigr)\subsetsep\mc X_V$$
is again a $K$-invariant graded submanifold satisfying the requirements of Proposition \ref{prop:F0}. The corresponding $\mc F$ is
$$\mc F=\Omega^0(\Sigma,\mf i)\oplus\{A\in\Omega^1(\Sigma,\mf i^\perp)\mid *PA=EPA\}\oplus\Omega^2(\Sigma,\mf i^\perp)\subsetsep\Omega(\Sigma,\g)=\mc X.$$
Note that this example is not ghostless. It appears in the Poisson-Lie T-duality for gauged 2-dim $\sigma$-models \cite{KS2}.
\end{example}

\begin{example}[4-dimensional electromagnetism with several charges]
\label{ex:4dbdry}
Let us consider the case of $n=4$ and $X=W[2]$ for some symplectic vector space $W$. For degree reasons we must have $Q_X=0$. Let $V$ be an oriented 4-dim vector space with a Minkowski inner product, so that on $\bw^2V$ the Hodge operator satisfies $*^2=-1$. Let $K\subset GL_+(V)$ be the group preserving the inner product up to rescaling.

We have $\mc X_V=\bw V^*\otimes W[2]$, in particular $\mc X_V^{\,0}=\on{Hom}(\bw^2 V,W)$. If $J:W\to W$ is a linear map such that $J^2=-1$ and $\omega(u,Jv)=\omega(v,Ju)$ then
$$\mc F_V^{\,0}:=\{T:\bw^2V\to W \mid T*=JT\}\subsetsep\on{Hom}(\bw^2 V,W)=\mc X_V^{\,0}$$
is a $K$-invariant Lagrangian vector subspace and
$$\mc F_V:=\mc F_V^{\,0}\oplus (\bw^{\geq 3}V^*\otimes W[2])\subsetsep \mc X_V$$
is ghostless. 

It is natural (for Hamiltonian positivity) to require that $\omega(v,Jv)\geq0$ for all $v\in W$. This makes $W$ to a (finite-dimensional) complex Hilbert space with the complex structure $J$.

A $K$-structure on an oriented 4-dim $\Sigma$ is a pseudo-conformal structure, and $\mc F\subset\mc X=\Omega(\Sigma,W)[2]$ is given by
\[\mc F=\bigl\{F\in\Omega^2(\Sigma,W) \mid *F=JF\bigr\}\oplus\Omega^{\geq3}(\Sigma,W). \qedhere\] 
\end{example}

\begin{example}[Yang-Mills]
\label{ex:ymbdry}
Let $\g$ be a Lie algebra  and let
$$X=T^*[n]T[1]\g[1]=\g^*[n-1]\times\g^*[n-2]\times\g[2]\times\g[1].$$

Consider an oriented $n$-dim vector space $V$ with an inner product. We search for an $SO(V)$-invariant dg Lagrangian submanifold
$$\mc F_V\subsetsep \mc X_V=\bw V^*\otimes X$$
satisfying the conditions of Proposition \ref{prop:F0}. For simplicity we shall suppose that $\mc F_V\subset \mc X_V$ is a graded vector subspace, and thus it needs to be a $\bw V^*$-submodule. Let us demand
$$\g[1]=\bw^0V^*\otimes\g[1]\subsetsep\mc F_V$$
(and thus $\bw V^*\otimes\g[1]\subset\mc F_V$) to impose gauge invariance (which implies, in particular, that $\mc F_V$ is not ghostless).

The generic such $\mc F_V$'s are of the following form: it is the sum of the dark fields in the table

\begin{figure}[h]
\vspace{-.5cm}
\centering
\begin{tikzpicture}
\matrix[table] (A) 
{
|[fill=mwhite]| & |[fill=mwhite]| & |[fill=mwhite]| & |[fill=mblack]|\\
|[fill=mwhite]| & |[fill=mwhite]| & |[fill=mwhite]| & |[fill=mblack]|\\
|[fill=mwhite]| & |[fill=mwhite]| & |[fill=mgray]| & |[fill=mblack]|\\
|[minimum height=\wigglycellheight, fill=mwhite]| & |[minimum height=\wigglycellheight, fill=mwhite]| & |[minimum height=\wigglycellheight, fill=mblack]| & |[minimum height=\wigglycellheight, fill=mblack]|\\
|[fill=mwhite]| & |[fill=mgray]| & |[fill=mblack]| & |[fill=mblack]|\\
|[fill=mwhite]| & |[fill=mblack]| & |[fill=mblack]| & |[fill=mblack]|\\
|[fill=mwhite]| & |[fill=mblack]| & |[fill=mblack]| & |[fill=mblack]|\\
};
\predpis{1}{$\bw^0 V^*$} \predpis{2}{$\bw^1 V^*$} \predpis{3}{$\bw^2 V^*$} \predpis{4}{$\dots$} \predpis{5}{$\bw^{n-2} V^*$} \predpis{6}{$\bw^{n-1} V^*$} \predpis{7}{$\bw^n V^*$}
\nadpis{1}{$\g^*[n-1]$} \nadpis{2}{$\g^*[n-2]$} \nadpis{3}{$\g[2]$} \nadpis{4}{$\g[1]$}
\draw[decorate,decoration={zigzag, segment length=\wigglyseglen,amplitude=\wigglyampl},  white, line width=\wigglywidth] ($(A-4-1) +(-1,0)$)--+(7.3,0);
\end{tikzpicture}
\vspace{-.5cm}
\end{figure}

{
}
\noindent plus the subspace of the sum of two gray fields 
$$
\bigl\{((*\otimes g)a,a)\mid a\in \bw^2 V^*\otimes\g[2]\bigr\}
$$
where $g:\g\to\g^*$ comes from an invariant inner product on $\g$.
\end{example}

\section{The field content of a typical AKSZ sandwich (degree counting)}\label{sec:htfiber}

Before passing to examples, let us describe the field (and ghost) content of a sandwich model given by $(X,L,\mc F)$ under  the following assumptions:
\begin{itemize}
\item $L\subset X$ is a dg Lagrangian submanifold (i.e.\  not  a more general Lagrangian map $L\to X$)
\item $\mc F$ is ultralocal and ghostless
\item $X$ is connected in the sense of rational homotopy theory \cite{Sul}, i.e.\ as a graded manifold $X$ is (isomorphic to) a \emph{negatively} graded vector space (this condition can be weakened without influencing the result)
\end{itemize}

Let us first describe the result.
Let $\ell:R\to X$ be a resolution of $L\subset X$ and let $\Phi:=\ell^{-1}(0)$ be its fiber (i.e.\ the homotopy fiber of $L\hookrightarrow X$). Then the physical fields of the sandwich are the grading-preserving maps
$$T[1]\Sigma\to\Phi.$$
In other words, if $\phi^i$ are local coordinates on $\Phi$ of degrees $d_i$ then the physical fields are forms 
$$A^i\in\Omega^{d_i}(\Sigma).$$

The ghosts (fields of negative degree) are then forms 
$$c^i_{[k]}\in\Omega^{d_i-k}(\Sigma),\quad 1\leq k\leq d_i$$
and the ghost number (i.e.\ minus the degree) of $c^i_{[k]}$ is $k$.

While, as usual in the BV formalism, it is impossible to disentangle the gauge symmetries from the equations of motion, the gauge symmetries are roughly speaking given by $\Phi$ seen as a higher Lie algebroid.

Let us now describe the local calculations leading to this result. Let $p_i$, $q^i$ be Darboux coordinates on $X$, with $L$ given by $q^i=0$. The resolution $R$ will then have local coordinates $\tilde p_i$, $\tilde q^i$, $\phi^i$ such that
$$\quad\omega_{\tilde Y}|^{\vphantom p}_{R}=d\tilde p_i\,d\phi^i$$
and such that 
$$\ell^*p_i=\tilde p_i + \text{higher},\quad\ell^*q^i=\tilde q^i +\text{higher}$$
 where ``higher'' means a function vanishing at least quadratically at the origin.
The degrees of the coordinates are 
$$\deg\phi^i=d_i,\ \deg \tilde q^i=d_i+1,\ \deg \tilde p_i=n-d_i-1,$$
 i.e.\ locally
$$R=\prod_i \R[d_i]\times\R[d_i+1]\times\R[n-d_i-1].$$
 As a result
$$\mc R=\on{Maps}(T[1]\Sigma,R)=\bigoplus_i\Omega(\Sigma)[d_i]\oplus\Omega(\Sigma)[d_i+1]\oplus\Omega(\Sigma)[n-d_i-1]$$
Let us denote the corresponding component forms by
$$\phi^{i(m)},\tilde q^{i(m)},\tilde p_i^{(m)}\in\Omega^m(\Sigma).$$

We now pass from $\mc R$ to $\mc Z=(\mc R_\mc F)^\text{reduced}$. After the restriction and reduction the following component forms remain:
$$\phi^{i(m)}\text{ for }m\leq d_i+1,\quad \tilde p_i^{(m)}\text{ for }m\geq n-d_i-1.$$

Let us conclude by giving the component forms the appropriate names:
\begin{align*}
\text{degree 0 (fields)}&:\quad A^i:=\phi^{i(d_i)},\quad B_i:=\tilde p_i^{(n-d_i-1)} \\
\text{degree $-k<0$ (ghosts)}&:\quad c^i_{[k]}:=\phi^{i(d_i-k)} \\
\text{degree > 0 (antifields)}&:\quad B^{+i}:=\phi^{i(d_i+1)},\quad A_i^+:=\tilde p_i^{(n-d_i)},\quad c^+_{[k]i}:=\tilde p_i^{(n-d_i+k)} 
\end{align*}
The fields $B_i$ should be considered as auxiliary, corresponding to the fact that $S_\mc Z$ is a first-order type action.

\section{Examples of sandwiches and dualities}\label{sec:ex}

\subsection{Electric-magnetic duality in higher dimensions}\label{sec:emhigh}
We start with the space $X=\R[a+1]\times\R[b+1]$, with coordinate $p$ on $\R[a+1]$ and $x$ on $\R[b+1]$. We take $Q_X=0$ and the degree $n=a+b+2$ symplectic form $\omega_X=dp\,dx$. This gives us
$$\mc X=\Omega(\Sigma)[a+1]\oplus\Omega(\Sigma)[b+1].$$

Supposing that $\Sigma$ has a (pseudo-)Riemannian metric, we define a ghostless ultralocal boundary condition $\mc F\subset\mc X$ via
\begin{multline*}
\mc F=\{(p,x)\in\Omega(\Sigma)[a+1]\oplus\Omega(\Sigma)[b+1] \mid\\
p^{(a+1)}=*x^{(b+1)},\, p^{(\leq a)}=0,\, x^{(\leq b)}=0\}
\end{multline*}
where $x^{(k)}$ denotes the $k$-form part of $x$. For $\theta_X=p\,dx$ we get
$$S_\mc F=\frac12\int_\Sigma x^{(b+1)}\,{*x^{(b+1)}}$$

The topological boundary condition will be given by $L=\R[b+1]\subset X$, i.e.\ by setting $p=0$. Let $R$ be the Koszul resolution of $L\subset X$, i.e.\ the coordinates on $R$ are $p$, $x$, and $\pi$ with $\deg\pi=a$, and the differential is $Q_R=p\partial_\pi$. The accompanying symplectic NQ manifolds are:
$$\tilde Y=\R[a]\times\R[b+1],\text{ coordinates $\pi$ and $x$,}\quad\omega_{\tilde Y}=dx\,d\pi,\ H_{\tilde Y}=0$$
$$Y=T^*[n]\tilde Y,\text{ coordinates $x$, $\pi$, $\xi$, $p$,}\quad\omega_Y=dp\,dx+d\xi\,d\pi,\ H_Y=\xi p$$
$$\text{coisotropic }R\subset Y\text{ given by }\xi=0$$
We have $\theta_Y^\text{taut}=p\,dx+\xi\,d\pi$. On $R$ it coincides with $\theta_X$ and we thus have $H_\text{rel}=0$.

The space of fields (i.e.\ the derived intersection) before the reduction is
\[\mc R_{\mc F}=\Omega(\Sigma)[a]\times\mc F\subsetsep\Omega(\Sigma)[a]\times\mc X=\mc R \]
with the additional component $\Omega(\Sigma)[a]$ corresponding to $\pi$.
Since $\omega_{\tilde Y}=dx\,d\pi$, the reduction of $\mc R_{\mc F}$ kills its $p$-component and a part of the $\pi$-component,  and we get
\[\mc Z\cong \Omega^{\le a+1}(\Sigma)[a]\oplus \Omega^{\ge b+1}(\Sigma)[b+1]\]
with the BV symplectic form pairing the two components.

Since $H_{\tilde Y}=0$ and $\omega_{\tilde Y}=dx\,d\pi$, we have
\[S_{\mathrm{AKSZ}}^{\tilde Y}=\int_\Sigma x\,d\pi.\]
Putting everything together, we obtain 
\[S_\mc Z=S_{\mathrm{AKSZ}}^{\tilde Y}+S_\mc F=\int_\Sigma x^{(n-1)}d\pi^{(0)}+\dots+x^{(b+1)}d\pi^{(a)}+\frac12x^{(b+1)}\,{*x^{(b+1)}}.\]

The appropriate names for the components are:
\begin{align*}
\text{degree 0 (fields)}&:\quad A:=\pi^{(a)},\quad B:=x^{(b+1)} \\
\text{degree $-k<0$ (ghosts)}&:\quad c_{[k]}:=\pi^{(a-k)},\quad k=1,\dots, a \\
\text{degree > 0 (antifields)}&:\quad B^+:=\pi^{(a+1)},\quad A^+:=x^{(b+2)},\quad c^+_{[k]}:=x^{(b+2+k)} 
\end{align*}
and the action now reads
$$
S_\mc Z=\int_\Sigma B\,dA+\frac12B\;{*B}+ A^+dc_{[1]}+c_{[1]}^+dc_{[2]}^{\vphantom{+}}+\dots+c_{[a-1]}^+dc_{[a]}^{\vphantom{+}}.
$$
This is the first order formulation of the action functional
$$\frac12\int_\Sigma dA\;{*dA}\qquad A\in\Omega^a(\Sigma)$$
(the first two terms of $S_\mc Z$) together with the ghost terms corresponding to the gauge transformations $A\to A+dA'$, $A'\to A'+dA''$, etc.

The dual model is obtained by considering the topological boundary condition given by $L'=\R[a+1]$, i.e.\ by exchanging $a$ and $b$.

\subsection{Electric-magnetic duality in 4d}\label{ss:em4}
Let us consider the theory arising from the Example \ref{ex:4dbdry}. Namely, we fix a symplectic vector space $W$ and we take $X=W[2]$. Together with $Q_X=0$, this makes the target $X$ an NQ symplectic manifold of degree $4$. We  have
$$\mc X=\Omega(\Sigma,W[2]).$$

The ghostless non-topological boundary condition, given by a pseudo-conformal structure on $\Sigma$ (which is a $K$-structure for the subgroup $K\subset GL_+(V)$ preserving a Minkowski inner product up to rescaling), is
$$\mc F=\{F\in\Omega^2(\Sigma,W[2])\mid *F=JF\}\oplus \Omega^{\geq3}(\Sigma,W[2])\subsetsep\mc X.$$
Here $J\colon W\to W$ a complex structure for which $\omega(\cdot,J\cdot)$ is symmetric and positive definite. Since $\mc F$ is $E_X$-invariant, taking $\theta_X=i_{E_X}\omega_X/4$ we have $\theta_\mc X|_\mc F=0$ and thus $S_\mc F=0$.

We decompose $W\cong U\oplus U^*$ into two Lagrangian vector subspaces and consider the topological boundary condition given by $L=U[2]$ ($U^*$ plays an auxiliary role). Let $R$  be the Koszul resolution of $U[2]\subset W[2]$. More explicitly, let $x^i$ denote the coordinates on $U$ and $p_i$ the corresponding dual coordinates on $U^*$. Then $R=W[2]\times U^*[1]$, with coordinates $x^i,p_i,\xi_i$,  and with the differential $Q_R=p_i\partial_{\xi_i}$. The corresponding acyclic $Y$ is 
\[Y=T^*[4]T^*[3]U^*[1]=W[2]\times T^*[4]U^*[1],\]
with the obvious embedding $R\hookrightarrow Y$, and $\tilde Y=T^*[3]U^*[1]=U[2]\times U^*[1]$ with $H_{\tilde Y}=0$.

The space of fields (i.e.\ the derived intersection) before the reduction is
\[\mc R_{\mc F}=\mc F\times \left(\Omega(\Sigma)\otimes U^*[1]\right).\]
For convenience, we have displayed (most of) this information in Figure \ref{im:em}, which is to be read as follows. 
The full table represents the space 
$$\mc R=\on{Maps}(T[1]\Sigma,R)= \Omega(\Sigma)\otimes R$$
 with cells corresponding to bihomogeneous components. The subspace $\mc {R_\mc F}\subset\mc R$ is the sum of the black cells plus a subspace of the sum of the gray cells
$$\bigl\{F\in\Omega^2\bigl(\Sigma,W[2]=U[2]\oplus U^*[2]\bigr) \mid *F=JF\bigr\}$$

\tikzset{ 
table/.style={
  matrix of nodes,
  row sep=-\pgflinewidth,
  column sep=-\pgflinewidth,
  nodes={rectangle,draw=black,text width=6ex,align=center},
  text depth=0.25ex,
  text height=3ex,
  nodes in empty cells
  },
texto/.style={font=\footnotesize\sffamily},
title/.style={font=\small\sffamily}
}
\definecolor{mblack}{gray}{0.25}
\definecolor{mgray}{gray}{0.7}
\definecolor{mwhite}{gray}{0.95}
\renewcommand\nadpis[2]{\node[anchor=base] at ([yshift=3.5ex]A-1-#1) {#2};}
\renewcommand\nadnadpis[2]{\node[anchor=base] at ([yshift=6.5ex]A-1-#1) {#2};}
\newcommand\janonadpis[2]{\node[anchor=base] at ([yshift=5ex]A-1-#1) {#2};}
\renewcommand\podpis[2]{\node[anchor=base] at ([yshift=-5.25ex]A-#1) {#2};}
\renewcommand\vpis[2]{\node[anchor=base] at ([yshift=-.8ex]A-#1) {#2};}
\renewcommand\vpisw[2]{\node[anchor=base] at ([yshift=-.8ex]A-#1) {\textcolor{white}{#2}};}
\renewcommand\predpis[2]{\node at ([xshift=-7.5ex]A-#1-1) {#2};}
\renewcommand\ocisluj[1]{\foreach[evaluate={\y=int(\x-1)}] \x in {1,...,#1} {\node at ([xshift=-6ex]A-\x-1) {$\y$};}}

\begin{figure}[h]
\centering
\begin{tikzpicture}
\matrix[table] (A) 
{
|[fill=mwhite]| & |[fill=mwhite]| & |[fill=mblack]|\\
|[fill=mwhite]| & |[fill=mwhite]|& |[fill=mblack]|\\
|[fill=mgray]| & |[fill=mgray]| & |[fill=mblack]|\\
|[fill=mblack]| & |[fill=mblack]| & |[fill=mblack]|\\
|[fill=mblack]| & |[fill=mblack]| & |[fill=mblack]|\\
};
\ocisluj{5}
\nadpis{1}{$U^*$} \nadpis{2}{$U$} \nadpis{3}{$U^*$}
\nadnadpis{1}{$-2$} \nadnadpis{2}{$-2$} \nadnadpis{3}{$-1$}
\podpis{5-1}{$p_i$} \podpis{5-2}{$x^i$} \podpis{5-3}{$\xi_i$}
\vpis{3-1}{$p^{(2)}$} \vpis{3-2}{$B$} \vpisw{1-3}{$c$} \vpisw{2-3}{$A$} \vpisw{3-3}{$B^+$} \vpisw{4-2}{$A^+$} \vpisw{5-2}{$c^+$}
\end{tikzpicture}
\vspace{-.2cm}
\caption{The space $\mc R_\mc F\subset \mc R$. The cells are labeled by components of $R$ (on horizontal axis) and by components of $\Omega(\Sigma)$ (on vertical axis). The name of the corresponding field is marked for some cells.}
\label{im:em}
\end{figure}

Since $\tilde\omega$ pairs the last two columns, the reduced space of fields is%
\[\mc Z\cong\Omega^{\ge 2}(\Sigma,U)[2]\oplus\Omega^{\le 2}(\Sigma,U^*)[1].\]
We then have, using the names from  Figure \ref{im:em}
\[S^{\tilde Y}_{\mathrm{AKSZ}}(p\circ f)=\int_\Sigma \la B,dA\ra+\la A^+, dc\ra. \]
For the last term in the action (\ref{eq:final}) we calculate
\[ H_\text{rel}=\tfrac{1}{4}i_{E_R}(j^*\theta^{\mathrm{taut}}_Y-\ell^*\theta_X)=\tfrac{1}{2}x^ip_i.\]
Expressing $p^{(2)}$ in terms of $B$ via the boundary condition $*(p^{(2)}+B)=J(p^{(2)}+B)$, we obtain
\[S_{\mc Z}=\int_\Sigma \la B,dA\ra+\la A^+, dc\ra + \alpha(B,B)+ \beta (B,*B),\]
for suitable $\alpha,\beta\in S^2U^*$.
This is the standard 1st order BV action of pure electrodynamics (with several charges), together with the topological term. 

Given $U$, any Lagrangian complement $U^*$ can be written as $$U^* = \{s(u) + J(u) \mid u\in U\},$$ where $s: U \to U$ is a linear map self-adjoint w.r.t.\ $g(u, v) = \omega(u, Jv)$. Then, ``integrating out'' $B$ (i.e.\ expressing it from its equation of motion and plugging the result back into the action), we get 
\[ S = \int_\Sigma \la dA , *dA \ra_g - \la s^t(dA), dA \ra_g\,. \]
Here, the pairing $\la\,,\,\ra_g$ on $U^*$ is the inverse of $g|_U$. 
Different choices of $U$ give different abelian Yang-Mills actions linked by S-duality \cite{EW}. Notice that the action $S$ depends on the choice of $U^*\subset W$ only  through the topological term $\la s^t(dA), dA \ra_g$.

 In the global picture involving gauge fields on non-trivial principal torus bundles we need to choose a lattice $\Lambda\subset W$, the vector subspaces $U,U^*\subset W$ should correspond to sub-tori of $W/\Lambda$ and the choice of $\Lambda$ must be such that $e^{iS}$ is independent of the choice of $U^*\subset W$.

%
%
%

\subsection{Scalar theory}\label{sec:scalar}

We consider $X=Y=T^*[n]\tilde Y$ for $\tilde Y=T^*[n-1]M$ and $M$ a (non-graded) manifold. Since $X$ is acyclic, we can take $R=X$, with the identity map $\ell\colon X\to X$ as the Lagrangian submersion. Choosing $\theta_X=\theta^{\mathrm{taut}}_X$, the formula (\ref{eq:final}) reduces to
\[S_{\mc R_\mc F}(f)=S_{\mathrm{AKSZ}}^{\tilde Y}(p\circ f)+S_\mc F(f).\]

Generic ghostless ultralocal boundary conditions $\mc F$ can be obtained as follows. Recall that $\mc F$ is completely determined by the Lagrangian submanifolds $\mc F^{\,0}_p:=\mc F_p\cap \mc X^{\,0}_p$, using the notation from Remark \ref{rem:negF}. 
We have
\[\mc X^{\,0}_p\cong T^*\mc P_p \otimes \bw^nT^*_p\Sigma,\]
where $\mc P_p$ is the space of grading preserving maps $T_p[1]\Sigma\to T^*[n-1]M$. 
In other words,
$$ \mc P_p:=T^*M\otimes \bw^{n-1}T^*_p\Sigma.$$
We choose a generating function $H_p\in C^\infty(\mc P_p)\otimes \bw^nT^*_p\Sigma$ and take $\mc F^{\,0}_p:=\on{graph}(dH_p)$.

\begin{figure}[h]
\centering
\begin{tikzpicture}
\matrix[table] (A) 
{
|[fill=mwhite]| & |[fill=mwhite]| & |[fill=mwhite]| & |[fill=mgray]|\\
|[fill=mwhite]| & |[fill=mwhite]| & |[fill=mgray]| & |[fill=mblack]|\\
|[minimum height=\wigglycellheight, fill=mwhite]| & |[minimum height=\wigglycellheight, fill=mwhite]| & |[minimum height=\wigglycellheight, fill=mblack]| & |[minimum height=\wigglycellheight, fill=mblack]|\\
|[fill=mwhite]| & |[fill=mgray]| & |[fill=mblack]| & |[fill=mblack]|\\
|[fill=mgray]| & |[fill=mblack]| & |[fill=mblack]| & |[fill=mblack]|\\
};
\predpis{1}{0} \predpis{2}{1} \predpis{3}{$\dots$} \predpis{4}{$n-1$} \predpis{5}{$n$}
\nadpis{1}{$T^*M$} \nadpis{2}{$T^*M$} \nadpis{3}{$TM$} \nadpis{4}{$M$}
\nadnadpis{1}{$-n$} \nadnadpis{2}{$1-n$} \nadnadpis{3}{$-1$} \nadnadpis{4}{0}
\vpis{1-4}{$x$} \vpisw{2-4}{$\pi^+$} \vpis{4-2}{$\pi$} \vpisw{5-2}{$x^+$}
\draw[decorate,decoration={zigzag, segment length=\wigglyseglen,amplitude=\wigglyampl},  white, line width=\wigglywidth] ($(A-3-1) +(-0.8,0)$)--+(5.2,0);
\end{tikzpicture}
\vspace{-.2cm}
\caption{The space $(\mc R_\mc F)_p\subset {\mc R}_p$ for the scalar theory. In order to understand the table as a graded vector space, we here identify locally $M\cong \R^m$.}
\end{figure}


Denoting the space of grading preserving maps $T[1]\Sigma\to T^*[n-1]M$ by $\mc P$, we get $\mc Z\cong T^*[-1]\mc P$. The space $\mc P$ corresponds to the fields and the fibers of $T^*[-1]$ to the antifields.

Choosing coordinates on $M$, we can describe $\mc Z$ via a tuple of forms $(x,\pi,x^+,\pi^+)$, where 
\[ x\in \Omega^0(\Sigma, \R^n),\,\, \pi\in \Omega^{n-1}(\Sigma, (\R^n)^*), \,\, x^+\!\in \Omega^n(\Sigma, (\R^n)^*),\,\, \pi^+\!\in \Omega^{1}(\Sigma, \R^n).\]
We can now write the action as
\[S_{\mc Z}=\int_\Sigma \pi_i dx^i+ H_p(\pi,x).\]
In particular, we see that the antifields $x^+,\pi^+$ do not enter in $S_{\mc Z}$.

Integrating out $\pi$'s we obtain an action of the form\footnote{This corresponds to a partial Legendre transform.}
\[S=\int_\Sigma \mathscr{L}(x,dx).\]
Note that if we call the space of grading preserving maps $T_p[1]\Sigma\to T[1]M$ by $\mc W_p$, we can interpret $\mathscr{L}$ (at $p\in \Sigma$) as the generating function for $\mc F^{\,0}_p$ in $\mc X^{\,0}_p\cong T^*\mc W_p \otimes \bw^nT^*_p\Sigma$.

\subsection{Yang-Mills}

Let $\g$ be a Lie algebra and let
$$X=T^*[n]T^*[n-1]\g[1]=\g^*[n-1]\times\g^*[n-2]\times\g[2]\times\g[1].$$
Since $X$ is acyclic, we can again take $Y=R=X$ (with $\ell=\mathrm{id}_X$) and set $\theta_X=\theta^{\mathrm{taut}}_X$.

We take $\mc F$ from Example \ref{ex:ymbdry} as our non-topological boundary condition. Recall that this is given by an invariant inner product $\la\cdot,\cdot\ra_\g$ on $\g$ and a pseudo-Riemannian metric on $\Sigma$. 

Choosing a basis of $\g$ we get linear coordinates $\theta^i$ on $\g[1]$, $t^i$ on $\g[2]$, $\sigma_i$ on $\g^*[n-2]$, and $s_i$ on $\g^*[n-1]$. 
We have 
$$S_\mc F=\frac12\int_{T[1]\Sigma}\la \sigma^{(n-2)}*\sigma^{(n-2)}\ra_\g.$$

\begin{figure}[h]
\centering
\begin{tikzpicture}
\matrix[table] (A) 
{
|[fill=mwhite]| & |[fill=mwhite]| & |[fill=mwhite]| & |[fill=mblack]|\\
|[fill=mwhite]| & |[fill=mwhite]| & |[fill=mwhite]| & |[fill=mblack]|\\
|[fill=mwhite]| & |[fill=mwhite]| & |[fill=mgray]| & |[fill=mblack]|\\
|[minimum height=\wigglycellheight, fill=mwhite]| & |[minimum height=\wigglycellheight, fill=mwhite]| & |[minimum height=\wigglycellheight, fill=mblack]| & |[minimum height=\wigglycellheight, fill=mblack]|\\
|[fill=mwhite]| & |[fill=mgray]| & |[fill=mblack]| & |[fill=mblack]|\\
|[fill=mwhite]| & |[fill=mblack]| & |[fill=mblack]| & |[fill=mblack]|\\
|[fill=mwhite]| & |[fill=mblack]| & |[fill=mblack]| & |[fill=mblack]|\\
};
\predpis{1}{0} \predpis{2}{1} \predpis{3}{2} \predpis{4}{$\dots$} \predpis{5}{$n-2$} \predpis{6}{$n-1$} \predpis{7}{$n$}
\nadpis{1}{$\g^*$} \nadpis{2}{$\g^*$} \nadpis{3}{$\g$} \nadpis{4}{$\g$}
\nadnadpis{1}{$1-n$} \nadnadpis{2}{$2-n$} \nadnadpis{3}{$-2$} \nadnadpis{4}{$-1$}
\podpis{7-1}{$s_i$} \podpis{7-2}{$\sigma_i$} \podpis{7-3}{$t^i$} \podpis{7-4}{$\theta^i$}
\vpisw{1-4}{$c$} \vpisw{2-4}{$A$} \vpisw{3-4}{$B^+$} \vpis{5-2}{$B$} \vpisw{6-2}{$A^+$} \vpisw{7-2}{$c^+$}
\draw[decorate,decoration={zigzag, segment length=\wigglyseglen,amplitude=\wigglyampl},  white, line width=\wigglywidth] ($(A-4-1) +(-0.8,0)$)--+(5.0,0);
\end{tikzpicture}
\vspace{-.2cm}
\caption{The space $\mc R_\mc F\subset \mc R$ for the Yang-Mills case. The degree $-1$ symplectic form $\tilde{\omega}_\mc R$ pairs columns $\sigma_i$ and $\theta^i$.}
\label{im:ym}
\end{figure}


As can be seen from Figure \ref{im:ym}, the reduced space is
\[\mc Z\cong \left(\Omega^{\ge n-2}(\Sigma)\otimes\g^*[n-2]\right)\oplus \left(\Omega^{\le 2}(\Sigma)\otimes\g[1]\right).\]
Since $\tilde Y = T^*[n-1]\g[1]$, we get $\theta_{\tilde Y}=\sigma_i\,d\theta^i$ and $H_{\tilde Y}=\tfrac12 f_{ij}^k \sigma_k\theta^i\theta^j$, and the action is
\[S_{\mc Z}=\int_\Sigma \sigma_i\,d\theta^i+\tfrac12 f_{ij}^k \sigma_k\theta^i\theta^j+\tfrac12\la \sigma^{(n-2)}*\sigma^{(n-2)}\ra_\g,\]
where $\sigma_i$, $\theta^i$ are now understood as inhomogeneous differential forms on $\Sigma$. 

Renaming the variables as in Figure \ref{im:ym} and writing $g^{ij}$ for the  inverse of the matrix of~$\la\cdot,\cdot\ra_\g$, we get
\begin{align*}
S_{\mc Z}=\int_\Sigma B_idA^i + \tfrac12 f_{ij}^k &B_kA^iA^j + \tfrac12g^{ij}B_i *B_j\\
& + A_i^+dc^i + f_{ij}^k A^+_kc^iA^j + f_{ij}^k B_kc^iB^{+j} + \tfrac12 f_{ij}^k c^+_kc^ic^j.
\end{align*}

This is the Yang-Mills action in the first order BV formulation. Taking only the part with fields of degree 0 (i.e.\ removing the terms containing ghosts and antifields) we get
\[S^{(0)}_{\mc Z}=\int_\Sigma B_idA^i+\tfrac12 f_{ij}^k B_kA^iA^j+\tfrac12g^{ij}B_i\, {*} B_j\]
and integrating out the auxiliary field $B$,
we obtain
\[S_{\textrm{Yang-Mills}}=\frac12\int_\Sigma \la F*F\ra_\g,\]
for $F^k=dA^k+\tfrac12 f_{ij}^k A^iA^j$.

%
\subsection{Poisson-Lie T-duality}\label{sec:pl}
Let us consider the case $n=2$, with $X=\g[1]$ for some quadratic Lie algebra $\g$. In this case, the corresponding AKSZ-model is the Chern-Simons theory.

A ghostless non-topological boundary condition $\mc F$ will be obtained as in Example \ref{ex:GM} out of a pseudo-conformal structure on $\Sigma$ and a generalized metric $E:\g\to\g$. We set
$$\mc F=\{A\in\Omega^1(\Sigma,\g)\mid *A=EA\}\oplus\Omega^2(\Sigma,\g).$$

For a topological boundary condition, we take $L=\h[1]$ where $\h\subset\g$ is a Lagrangian Lie subalgebra. Its resolution is given in Example \ref{ex:G/H}, i.e.\
$$R=\g[1]\times G/H \subsetsep \g[1]\times T^*[2](G/H)=Y.$$

In the language of Courant algebroids, $Y$ corresponds to the exact Courant algebroid $\g\times G/H\to G/H$, with the bracket and pairing of constant sections given by $\g$ and with the anchor map given  by the action of $\g$ on $G/H$. 

To proceed, we need to put $Y$ to the form
$$Y\cong T^*[2]\tilde Y,\quad \tilde Y=T^*[1](G/H),$$
or equivalently, we need an isomorphism of the exact Courant algebroid $\g\times G/H$ with the standard Courant algebroid $(T\oplus T^*)(G/H)$. In general this can be done only locally, as the class of $\g\times G/H$ in $H^3(G/H;\R)$ may be non-zero.

For simplicity, let us suppose in the fashion of Example \ref{ex:gradedplgrps} that $\g$ can be decomposed (as a vector space) into a direct sum of two complementary Lagrangian Lie subalgebras $\g=\h\oplus\h'$ and take $H'$ to be a Poisson-Lie group corresponding to $\h'$. The local isomorphism $H'\cong G/H$ then gives the (local) identification of NQ manifolds
\begin{equation}
\label{eq:action_vs_poisson}
\h[1]\times H'\cong T^*[1]H',
\end{equation}
where the differential on the LHS is given by the action of $\h$ on $H'\cong G/H$, on the RHS it comes from the Poisson bivector $\pi$ on the Poisson-Lie group $H'$, and the identification is given by the left trivialization.
We can thus write 
\[Y\cong\g[1]\times T^*[2]H'\cong T^*[2](\h[1]\times H')\cong T^*[2]T^*[1]H',\]
\[R=\g[1]\times H', \qquad \tilde Y=T^*[1]H',\qquad X\cong T^*[2]\h[1]=\h'[1]\times\h[1].\]
Choosing $\theta_X$ to be the tautological 1-form on $T^*[2]\h[1]$ we get
\[j^*\theta^{\mathrm{taut}}_Y-\ell^*\theta_X=0\]
and thus $H_{rel}=0$.

The reduced space is
\[\mc Z\cong T^*[-1]\bigl(\on{Maps}(\Sigma,H')\times\Omega^1(\Sigma,\h)\bigr)\]
as in \S\ref{sec:scalar} (with $M=H'$).
Let us use the notation $g\in \on{Maps}(\Sigma,H')$ and $B\in\Omega^1(\Sigma,\h)$ for the fields of degree 0. Note that for degree reasons, these will be the only fields entering into the action $S_\mc Z$.

\begin{figure}[h]
\centering
\begin{tikzpicture}
\matrix[table] (A) 
{
|[fill=mwhite]| & |[fill=mwhite]| & |[fill=mblack]|\\
|[fill=mgray]| & |[fill=mgray]| & |[fill=mblack]|\\
|[fill=mblack]| & |[fill=mblack]| & |[fill=mblack]|\\
};
\ocisluj{3}
\nadpis{1}{$\h'$} \nadpis{2}{$\h$} \nadpis{3}{$H'$}
\nadnadpis{1}{$-1$} \nadnadpis{2}{$-1$} \nadnadpis{3}{$0$}
\vpisw{1-3}{$g$} \vpis{2-2}{$B$}
\end{tikzpicture}
\vspace{-.2cm}
\caption{The space $\mc R_\mc F\subset \mc R$ for the case of Poisson-Lie T-duality. The last column is described using a local chart on $H'$ (so that the table represents a graded vector space).}
\label{im:plt}
\end{figure}


To compute $S_\mc F$ we use the Lagrangian splitting
\[\mc X_V^{\,0}=\on{Hom}(V,\h)\oplus \on{Hom}(V,\h')\]
and (supposing $\mc {F}_V^{\,0}\cap \on{Hom}(V,\h')=0$) understand $\mc {F}_V^{\,0}$ as the graph of a linear map 
\[\on{Hom}(V,\h)\to \on{Hom}(V,\h')\cong \on{Hom}(V,\h)^*\otimes\bw^2 V^*,\]
corresponding to a bilinear map $\psi\colon \on{Hom}(V,\h)^{\otimes 2}\to\bw^2 V^*$.
For the above choice of $\theta_X$ we then have
\[S_\mc F(p\circ f)=\frac12\int_\Sigma\psi (B,B).\]


The full action is 
\[S_\mc Z=\int_\Sigma \la B,g^{-1}dg\ra+\tfrac12(g^{-1}\pi_g)(B,B)+\tfrac12\psi(B,B).\]
Here $\pi_g$ is the Poisson bivector at $g\in H'$ and $g^{-1}\pi_g\in\bw^2\h'$ its left translate to the origin.
Integrating out $B$ we get
\[S=\frac12\int_\Sigma e(g^{-1}dg,g^{-1}dg),\quad e_g=(g^{-1}\pi_g+\psi)^{-1}.\]
The Poisson-Lie T-duality \cite{KS} corresponds to the switching of the roles of $\h$ and $\h'$, or to a choice of a different pair $\h,\h'\subset\g$.

\begin{rem}
The more general boundary condition from Example \ref{ex:dressing} gives rise to the BV picture of the Poisson-Lie T-duality for gauged $\sigma$-models (dressing cosets) from \cite{KS2}.
\end{rem}

\subsection{Higher Poisson-Lie T-duality}
Let us now generalize the previous example to higher dimensions. Suppose $\g$ is a graded Lie algebra concentrated in non-positive degrees, with an invariant pairing of degree $n-2$ so that 
$$X=\g[1]$$
 is an NQ symplectic manifold with $\omega_X$ of degree $n$.
Let $\h\subset\g$ be a graded Lagrangian Lie subalgebra, and let us set $L=\h[1]\subset X$. This gives us, as in Example \ref{ex:G/H}, the resolution $R=\g[1]\times G/H$, and the fiber $\Phi$ from \S\ref{sec:htfiber} is the NQ-manifold $\Phi=G/H$ (with $Q_\Phi=0$). The examples from \S\ref{sec:emhigh}, \S\ref{ss:em4} and \S\ref{sec:pl} are special cases of this setup.

To compute the action functional using our methods, let us suppose that $\h'\subset\g$ is another graded Lagrangian Lie subalgebra, which is complementary to $\h$ (which gives us $\h'\cong \h^*[n-2]$).
  Example \ref{ex:gradedplgrps} gives us a  resolution of $\h[1]\subset\g[1]$ as
\[R=\g[1]\times H'\subsetsep T^*[n](\h[1]\times H')\cong T^*[n]T^*[n-1]H'=Y,\]
where $H'$ is a graded Poisson-Lie group integrating $\h'$.

Let us now choose a ghostless $\mc F\subset\mc X=\Omega(\Sigma,\g[1])$, or equivalently a Lagrangian submanifold $\mc F^0\subset \mc X^0$. A simple way to do it is to notice that
$$\mc X=\Omega(\Sigma,\h[1])\oplus\Omega(\Sigma,\h'[1])\cong T^*\Omega(\Sigma,\h[1])$$
and similarly
$$\mc X^0=T^*\Omega(\Sigma,\h[1])^0.$$
We thus choose a functional $S_\mc F$ on $\Omega(\Sigma,\h[1])^0$ and use is as a generating function of $\mc F^0$. The resulting space $\mc R_\mc F$ is depicted in the table. (The most natural case is when $S_\mc F$ contains no derivatives of the fields; this gives rise to an ultralocal $\mc F$.)
 

\begin{figure}[h]
	\centering
	\begin{tikzpicture}
	\matrix[table] (A) 
	{
		|[fill=mwhite]| & |[fill=mwhite]| & |[fill=mwhite]| &[1.5mm] |[fill=mwhite]| & |[fill=mwhite]| & |[fill=mwhite]| &[1.5mm] |[fill=mblack]| & |[fill=mblack]| & |[fill=mblack]|\\
		|[fill=mwhite]| & |[fill=mwhite]| & |[fill=mgray]| & |[fill=mwhite]| & |[fill=mwhite]| & |[fill=mgray]| & |[fill=mblack]| & |[fill=mblack]| & |[fill=mblack]|\\
		|[minimum height=\wigglycellheight, fill=mwhite]| & |[minimum height=\wigglycellheight]| & |[minimum height=\wigglycellheight, fill=mblack]| & |[minimum height=\wigglycellheight, fill=mwhite]| & |[minimum height=\wigglycellheight]| & |[minimum height=\wigglycellheight, fill=mblack]| & |[minimum height=\wigglycellheight, fill=mblack]| & |[minimum height=\wigglycellheight, fill=mblack]| & |[minimum height=\wigglycellheight, fill=mblack]|\\
		|[fill=mgray]| & |[fill=mblack]| & |[fill=mblack]| & |[fill=mgray]| & |[fill=mblack]| & |[fill=mblack]| & |[fill=mblack]| & |[fill=mblack]| & |[fill=mblack]|\\
		|[fill=mblack]| & |[fill=mblack]| & |[fill=mblack]| & |[fill=mblack]| & |[fill=mblack]| & |[fill=mblack]| & |[fill=mblack]| & |[fill=mblack]| & |[fill=mblack]|\\
	};
	\filldraw[draw=black, fill=mblack]  ($(A-3-2.south east)+(-0.15pt,0.15pt)$)rectangle(A-3-2);
	\filldraw[draw=black, fill=mgray]  ($(A-3-2.south west)+(0.15pt,0.15pt)$)rectangle(A-3-2);
	\filldraw[draw=black, fill=mgray]  ($(A-3-2.north east)+(-0.15pt,-0.15pt)$)rectangle(A-3-2);
	\filldraw[draw=black, fill=mwhite]  ($(A-3-2.north west)+(0.15pt,-0.15pt)$)rectangle(A-3-2);
	\filldraw[draw=black, fill=mblack]  ($(A-3-5.south east)+(-0.15pt,0.15pt)$)rectangle(A-3-5);
	\filldraw[draw=black, fill=mgray]  ($(A-3-5.south west)+(0.15pt,0.15pt)$)rectangle(A-3-5);
	\filldraw[draw=black, fill=mgray]  ($(A-3-5.north east)+(-0.15pt,-0.15pt)$)rectangle(A-3-5);
	\filldraw[draw=black, fill=mwhite]  ($(A-3-5.north west)+(0.15pt,-0.15pt)$)rectangle(A-3-5);
	\draw[decorate,decoration={zigzag, segment length=\wigglyseglen,amplitude=\wigglyampl},  white, line width=\wigglywidth] ($(A-1-2) +(0,0.395)$)--+(0,-4.3);
\draw[decorate,decoration={zigzag, segment length=\wigglyseglen,amplitude=\wigglyampl},  white, line width=\wigglywidth] ($(A-1-5) +(0,0.395)$)--+(0,-4.3);
\draw[decorate,decoration={zigzag, segment length=\wigglyseglen,amplitude=\wigglyampl},  white, line width=\wigglywidth] ($(A-1-8) +(0,0.395)$)--+(0,-4.3);
	\predpis{1}{0} \predpis{2}{1} \predpis{3}{$\dots$} \predpis{4}{$n-1$} \predpis{5}{$n$}
	\nadpis{1}{$\h'$} \janonadpis{2}{$\cdots$} \nadpis{3}{$\h'$} \nadpis{4}{$\h$}  \nadpis{6}{$\h$} \nadpis{7}{$H'$}  \nadpis{9}{$H'$}
	\nadnadpis{1}{$1-n$}  \nadnadpis{3}{$-1$} \nadnadpis{4}{$1-n$} \janonadpis{5}{$\cdots$} \nadnadpis{6}{$-1$} \nadnadpis{7}{$2-n$} \janonadpis{8}{$\cdots$} \nadnadpis{9}{$0$}
	\draw[decorate,decoration={zigzag, segment length=\wigglyseglen,amplitude=\wigglyampl},  white, line width=\wigglywidth] ($(A-3-1) +(-0.7,0)$)--+(10.9,0);
	\end{tikzpicture}
	\vspace{-.2cm}
	\caption{$\mc R_\mc F\subset \mc R$ for the higher Poisson-Lie T-duality.}
\end{figure}


To describe the quotient space $\mc Z$ of $\mc R_\mc F$ we now take $\mc H'_2$ to be the graded normal subgroup of the graded group $\on{Maps}(T[1]\Sigma,H')$ with the Lie algebra $\Omega(\Sigma, \h')^{\geq2}$. We then have
\[\mc Z\cong\Omega(\Sigma,\h[1])^{\ge 0}\times \on{Maps}(T[1]\Sigma, H')/{\mc H'_2}.\]
Denoting by $\bar B$ and $\bar g$ the two components we then have
\[S_\mc Z(\bar B,\bar g)=\int_\Sigma \la \bar B,\bar g^{-1}d\bar g\ra + \tfrac12(\bar g^{-1}\pi_{\bar g})(\bar B,\bar B)+S_\mc F(B).\]
Here $ B\in\Omega(\Sigma,\h[1])^0$ is the degree-0 component of $\bar B$, we understand $\bar g^{-1}\pi_{\bar g}$ as a function on $H'$ with values in $\h'\otimes\h'$, and $\bar g^{-1}d\bar g=i_{d_{\Sigma}}\bar g^*\theta_{MC}$, where $\theta_{MC}$ is the Maurer-Cartan form on $H'$.
%
%

The action is written in the proper BV form, including all the ghosts corresponding to its respective gauge symmetries. The dual action is then obtained by reversing the role of $\h$ and $\h'$, or possibly by choosing another pair $\h,\h'\subset\g$. 
\medskip

For $n=3$, the general case consists of the semi-abelian double $\g=\mathfrak{k}\oplus\mathfrak{k}^*[1]$ for $\mathfrak{k}$ a Lie algebra. Suppose we decompose $\kk$ into a direct sum (in the vector space sense) of two complementary Lie subalgebras $\kk=\kk_1\oplus\kk_2$ (of arbitrary dimensions). We then set $\h=\kk_1\oplus \on{Ann}\kk_1[1]$ and $\h'=\kk_2\oplus \on{Ann}\kk_2[1]$.

\begin{example} 
For illustration, let us take $\h=\mathfrak{k}^*[1]$, and $\h'=\mathfrak{k}$, and let $K$ be a Lie group corresponding to $\kk$.

\begin{figure}[h]
\begin{minipage}{0.45\hsize}
\centering
\setlength\extrarowheight{1pt}

\begin{tikzpicture}
\matrix[table] (A) 
{
|[fill=mwhite]| & |[fill=mwhite]| & |[fill=mblack]|\\
|[fill=mgray]| & |[fill=mwhite]| & |[fill=mblack]|\\
|[fill=mblack]| & |[fill=mgray]| & |[fill=mblack]|\\
|[fill=mblack]| & |[fill=mblack]| & |[fill=mblack]|\\
};
\ocisluj{4}
\nadpis{1}{$\mathfrak{k}$} \nadpis{2}{$\mathfrak{k}^*$} \nadpis{3}{$K$}
\nadnadpis{1}{$-1$} \nadnadpis{2}{$-2$} \nadnadpis{3}{$0$}
\vpisw{1-3}{$ g$} \vpisw{2-3}{$ B^+$} \vpis{3-2}{$ B$} \vpisw{4-2}{$ g^+$}
\end{tikzpicture}
\vspace{-.2cm}
\caption{$\mc R_\mc F\subset \mc R$ for boundary condition given by $\h$.}
\end{minipage} \begin{minipage}{0.45\hsize}
\centering
\setlength\extrarowheight{1pt}
\begin{tikzpicture}
\matrix[table] (A) 
{
|[fill=mwhite]| & |[fill=mwhite]| & |[fill=mblack]|\\
|[fill=mwhite]| & |[fill=mgray]| & |[fill=mblack]|\\
|[fill=mgray]| & |[fill=mblack]| & |[fill=mblack]|\\
|[fill=mblack]| & |[fill=mblack]| & |[fill=mblack]|\\
};
\ocisluj{4}
\nadpis{1}{$\mathfrak{k}^*$} \nadpis{2}{$\mathfrak{k}$} \nadpis{3}{$\mathfrak{k}^*$}
\nadnadpis{1}{$-2$} \nadnadpis{2}{$-1$} \nadnadpis{3}{$-1$}
\vpisw{1-3}{$c$} \vpisw{2-3}{$A$} \vpisw{3-3}{$B^+$} \vpis{2-2}{$B$} \vpisw{3-2}{$A^+$} \vpisw{4-2}{$c^+$}
\end{tikzpicture}
\vspace{-.2cm}
\caption{$\mc R_\mc F\subset \mc R$ for boundary condition given by $\h'$.}
\label{im:second_illustration}
\end{minipage}
\end{figure}

We first consider the boundary condition given by $\h$. We have $H'=K$, with the zero Poisson bivector. The reduced space is
\[\mc Z\cong T^*[-1]\left(\on{Maps}(\Sigma,K)\times \Omega^2(\Sigma,\mathfrak{k}^*)\right).\]
Using the notation $ g\in\on{Maps}(\Sigma,K)$, $ B\in\Omega^2(\Sigma,\mathfrak{k}^*)$, the action becomes
\[S_\h=\int_\Sigma \la  B, { g}^{-1}d g\ra+S_\mc F( B).\]
The field $B$ should be understood as auxiliary; once we integrate it out we get an action functional for maps $g:\Sigma\to K$.


We now take the boundary condition determined by $\h'$. Integrating $\h$ we obtain the abelian group $H=\mathfrak{k}^*[1]$, with the Kirillov-Kostant-Souriau Poisson structure.
Using the notation from Figure \ref{im:second_illustration}, we obtain the action
\begin{align*}
S_{\h'}&=\int_\Sigma \la B,dA\ra +\la A^+,dc\ra+\tfrac12 \la A,[B,B]\ra+\la c,[B,A^+]\ra+S'_\mc F(B)\\
&=\int_\Sigma \la B,D_BA\ra+\la A^+,D_B c\ra+S'_\mc F(B),
\end{align*}
where $D_B$ is the covariant derivative w.r.t.\ the connection $B\in\Omega^1(\Sigma,\kk)$ and $S'_\mc F$ is obtained from $S_\mc F$ via the Legendre transform.

The field $c$ is a ghost. The corresponding gauge group is the abelian group $\on{Maps}(\Sigma,\kk^*)$, acting via $A\mapsto A+D_Bu$, where $u\in \on{Maps}(\Sigma,\kk^*)$.
\end{example}

The general $n=4$ case has the form $\g=\kk\oplus W[1]\oplus \kk^*[2]$, with $W$ a symplectic vector space, and the graded Lie bracket given by a Lie algebra structure on $\kk$ and a symplectic representation of $\kk$ in $W$. Lagrangian Lie subalgebras are of the form $\mf n\oplus U[1]\oplus \on{Ann}\mf n[2]$, with $\mf n\subset\kk$ a Lie subalgebra, and $U\subset W$ an $\mf n$-invariant Lagrangian subspace. 

As mentioned above, the homotopy fiber  of $L\subset X$ is $\Phi=G/H$. The physical fields of the resulting theory (after eliminating the auxiliary fields) are thus maps to $K/N$,  1-forms valued in  $U^*\cong W/U$ (c.f.\ \S\ref{ss:em4}), and 2-forms valued in $\mf n^*$ (or rather valued in the vector bundles over $K/N$ associated to these two representations of $N$). Furthermore, there is a set of ghosts for the $U^*$-valued 1-forms, and ghosts together with ghosts for ghosts for the $\mf n^*$-valued 2-forms. 
\medskip

It should be noted that for dimensions $n\geq3$ the ansatz $X=\g[1]$, with $\g$ a graded Lie algebra,  is somewhat restrictive. One should allow $\g$ to be a minimal $L_\infty$-algebra, i.e.\ allow the Hamiltonian $H_X$ to have also quartic and higher terms. The resulting homotopy fibre $\Phi$ would then still be $G/H$ (where $G$ and $H$ are the N-groups integrating $\g$ and $\h$ when we keep only the binary bracket, i.e.\ only the cubic part of $H_X$), but now $Q_\Phi$ might be non-zero.

\end{document}